  \newcommand\figcaption{\def\@captype{figure}\caption}
  \newcommand\tabcaption{\def\@captype{table}\caption}
\newtheorem{definition}{Definition}
\newtheorem{remark}{Remark}
\newtheorem{example}{Example}
\newtheorem{proposition}{Proposition}
\newtheorem{theorem}{Theorem}
\newtheorem{corollary}{Corollary}
\begin{document}

\begin{frontmatter}

\title{Making Problems Tractable on Big Data via Preprocessing\\ with Polylog-size Output}



\author[mymainaddress,mysecondaryaddress]{Jiannan Yang}
\ead{yjn19920627@pku.edu.cn}

\author[mymainaddress,mysecondaryaddress]{Hanpin Wang}
\ead{whpxhy@pku.edu.cn}

\author[mymainaddress,mysecondaryaddress]{Yongzhi Cao\corref{mycorrespondingauthor}}
\cortext[mycorrespondingauthor]{Corresponding author}
\ead{caoyz@pku.edu.cn}

\address[mymainaddress]{Institute of Software, School of Electronics Engineering and Computer Science,
Peking University, Beijing 100871, China}
\address[mysecondaryaddress]{Key Laboratory of High Confidence Software Technologies, Peking University, Ministry of Education, China}

\begin{abstract}
To provide a dichotomy between those queries that can be made feasible on big data after appropriate preprocessing and those for which preprocessing does not help, Fan et al. developed the $\sqcap$-tractability theory.
This theory provides a formal foundation for understanding the tractability of query classes in the context of big data. Along this line, we introduce a novel notion of $\sqcap'$-tractability in this paper.
Inspired by some technologies used to deal big data, we place a restriction on preprocessing function, which limits the function to produce a relatively small database as output, at most polylog-size of the input database.
At the same time, we bound the redundancy information when re-factorizing data and queries for preprocessing.
These changes aim to make our theory more closely linked to practice.
We set two complexity classes to denote the classes of Boolean queries that are $\sqcap'$-tractable themselves and that can be made $\sqcap'$-tractable, respectively.
Based on a new factorization in our complexity classes, we investigate two reductions, which differ from whether allowing re-factorizing data and query parts.
We verify the transitive and compatible properties of the reductions and analysis the complete problems and sizes of the complexity classes.
We conclude that all PTIME classes of Boolean queries can be made $\sqcap'$-tractable, similar to that of the $\sqcap$-tractability theory.
With a little surprise, we prove that the set of all $\sqcap'$-tractable queries is strictly smaller than that of all $\sqcap$-tractable queries, and thus the set of $\sqcap'$-tractable queries is properly contained in that of PTIME queries.
In this way, we attain a new complexity class inside the complexity class of PTIME queries.
\end{abstract}

\begin{keyword}Big data\sep complexity class\sep preprocessing\sep query\sep tractability
\end{keyword}
\end{frontmatter}

\section{Introduction}
Nowadays, big data has become an increasingly hot topic in industry, academe, and government around the world \cite{nrc2013}.
While the rise of big data provides interesting new insights and great opportunities in many fields, it also presents new challenges for information technology at every step of big data analysis pipeline \cite{jagadish2014big}, due to the feature that the dataset size is beyond the ability of typical database software tools to capture, store, manage, and analyse (see, for example, \cite{michael2013big,philip2014data,sarma2013consistent,mozafari2013high}).

As a result, the traditional view on the tractability of query classes, among others, has been impacted in the context of big data \cite{fan2013making,wu2014data,jung2012synchronous}.
More concretely, a query class is traditionally considered tractable if there is an algorithm for answering the queries in time bounded by a polynomial (PTIME) in the size of database and query \cite{papadimitriou2003computational}.
However, in the background of big data, such an algorithm may become infeasible in practice. For instance, even with disk scanning speed of $6$GB$/$s, a linear scan of a dataset of $1$ PetaByte (PB, $10^{15}$ bytes) will take $1.9$ days \cite{fan2013making}.
This suggests that PTIME is a coarse metric for characterizing query classes under the background of big data.
To reduce the query time, some new technologies, including data preprocessing \cite{fan2013making,jung2012synchronous}, scale independence of queries \cite{fan2014scale,cao2014bounded}, query-preserving data compression \cite{fan2012query}, cloud computing \cite{hashem2015rise}, and parallel, distributed algorithm (MapReduce) \cite{dean2008mapreduce,kim2015supporting},  have been proposed \cite{fan2014querying,marz2015big}.
In particular, the preprocessing step that can be considered as an off-line process with a one-time cost has been adopted in some applications. After preprocessing a database of $1$ PB by building indices, one can answer all queries in seconds rather than $1.9$ days \cite{fan2013making}.

To specify the tractability of query classes in the context of big data with preprocessing, Fan et al. \cite{fan2013making} established a formal foundation in terms of computational complexity. In particular, the notion of $\sqcap$-tractability is proposed to describe the query classes that can be answered in parallel poly-logarithmic time (NC) after a PTIME preprocessing.
Moreover, they discussed the query class $\sqcap{}T_Q^0$, the set of all $\sqcap$-tractable queries, and showed that $\sqcap{}T_Q^0$ is properly contained in P, unless $\text{P}=\text{NC}$.
For the queries that are not $\sqcap$-tractable, they discussed the query classes $\sqcap{}T_Q$ that can be effectively converted to $\sqcap$-tractable queries by re-factorizing data and queries for preprocessing.
Correspondingly, the decision problems that can be made $\sqcap$-tractable, denoted by $\sqcap{}T_P$, have been considered as well.
Under NC-factor reducibility, the Breadth-Depth Search problem (BDS) is proven to be complete for $\sqcap{}T_P$.
Accordingly, the query version of BDS is $\sqcap{}T_Q$-complete. These fundamental results make a first step towards understanding the tractability of queries in the context of big data.

Remarkably, a closer examination of the preprocessing functions utilized shows that some of them have a restriction on the size of outputs.
For example, in \cite{fan2014queryingbiggraphs}, queries on a graph $G$ only access a fraction $G_Q$ of $G$ with size $|G_Q|\leq\alpha|G|$,
where $\alpha$ is a small ratio.
In \cite{cao2014bounded}, the small database accessed is bounded to contain at most constant number tuples due to some real-life restriction; for example some social networks impose limits on friend numbers.
In \cite{sarma2013consistent}, the thinning problem of efficiently sampling regions from a geographical dataset for visualization on a map is studied.

Motivated by the $\sqcap$-tractability theory and the observation of the size-restricted outputs of some preprocessing functions, we introduce a novel notion of $\sqcap'$-tractability and discuss corresponding complexity classes in this paper.
The main difference from the $\sqcap$-tractability theory is that we place a restriction on preprocessing functions, which limits the functions to produce a relatively small database as output, at most polylog-size of the input database.
At the same time, we bound the redundancy information when re-factorizing data and queries for preprocessing.
These changes aim to make our theory more closely linked to practice.
We define the complexity class $\sqcap'T_Q^0$, consisting of all $\sqcap'$-tractable queries, and show that the NC reduction, which is defined in the $\sqcap$-tractability theory and does not allow re-factorization, is also proven to be transitive and compatible.
We then analysis the complete problem and the size of $\sqcap'T_Q^0$.
In particular, we discover a problem, which is in $\sqcap{}T_Q^0$ but not in $\sqcap'T_Q^0$, and thus conclude that $\sqcap'T_Q^0$ is strictly smaller than $\sqcap{}T_Q^0$.
We continue to use $\sqcap'T_P$ and $\sqcap'T_Q$ to denote problems and classes of Boolean queries that can be made $\sqcap'$-tractable, respectively.
After introducing the new notion of factorization, we develop the corresponding concepts of NC reduction for $\sqcap'T_P$ and $\sqcap'T_Q$, and prove that they are transitive and compatible in $\sqcap'T_P$ and $\sqcap'T_Q$.
Furthermore, we present the complete problems of $\sqcap'T_P$ and $\sqcap'T_Q$ and acquire their sizes.

Compared to the $\sqcap$-tractability theory, the main contribution of the paper is as follows.

  (1) We consider a restriction on preprocessing functions, such that the outputs of such functions is at most polylog-size of its input, guided by some afore-mentioned database processing technologies.
      These technologies assure a relatively small database as output and make it convenient for the analysis after preprocessing.
        With this restriction, we successfully get a smaller complexity class, that is, $\sqcap'T_Q^0\subsetneq\sqcap{}T_Q^0$ and thus, $\sqcap'T_Q^0\subsetneq\text{P}$.

  (2) We bound the redundant information in the process of factorization to be at a constant level.
        It guarantees the efficiency of the subsequent preprocessing step, which is important in the background of big data.
        In addition, it also provides help to make sure that the query part is at most polylog-size of the original instance.
        We remark that some factorizations in \cite{fan2013making}, for example $\Upsilon_L$ in the proof of Theorem 5, breach this condition.

  (3) We prove the properties of transitivity and compatibility of our first NC reduction ($\leq_{fcr}^{NC}$) in a different way from their corresponding proofs in the $\sqcap$-tractability theory, because the original proofs are not applicable to our settings.

\begin{table}
\caption{List of symbols\label{list_symbol}}
{
\begin{tabular}{ll}
\hline
Symbol & Description \\
\hline
$\Upsilon=(\pi_1,\pi_2,\rho,c)$ & factorization with constant redundancy \\
$\mathcal{Q}$ & class of Boolean queries \\
$L$ & decision problem \\
$L_\mathcal{Q}$ & decision problem for $\mathcal{Q}$ \\
$S$ & language of pairs \\
$S_\mathcal{Q}$ & language of pairs for $\mathcal{Q}$ \\
$S_{(L,\Upsilon)}$ & language of pairs for $(L,\Upsilon)$ \\
$\Pi$ & data processing function \\
$\sqcap'$-tractability & tractability after data processing with polylog-size output \\
$\sqcap'T_Q^0$ & set of class of Boolean queries that are $\sqcap'$-tractable \\
$\sqcap{}'T_P$ & set of decision problems that can be made $\sqcap'$-tractable \\
$\sqcap{}'T_Q$ & set of class of Boolean queries that can be made $\sqcap'$-tractable \\
$\leq{}^{NC}_{fcr}$ & NC-fcr Reducibility \\
$\leq{}^{NC}_F$ & NC-F Reducibility \\
\hline

\end{tabular}}
\end{table}

The remainder of the paper is structured as follows.
After reviewing some concepts of the $\sqcap$-tractability theory in Section~\ref{section_background}, we introduce $\sqcap'$-tractable queries and three complexity classes $\sqcap{}'T_Q^0$, $\sqcap{}'T_P$, and $\sqcap{}'T_Q$ in Section~\ref{section_tractablequeries}.
We propose our first NC reduction and show that it is transitive and compatible in Section~\ref{section_reducibility}.
Section~\ref{section_sizecomplexityclass} is devoted to the complete problems and the sizes of $\sqcap{}'T_P$ and $\sqcap{}'T_Q$, and in Section~\ref{section_nofactorization}, we discuss the properties of $\sqcap'T_Q^0$.
The paper is concluded in Section~\ref{section_conclusion}.
For the convenience of the reader, we list the symbols and their descriptions in Table~\ref{list_symbol}.

\section{Review of Fan et al.'s Tractable Queries on Big Data}
\label{section_background}
For the convenience of the reader, we briefly review Fan et al.'s $\sqcap$-tractability theory of tractable queries on big data with preprocessing in this section. We refer the reader to \cite{fan2013making} for justification and examples.

Let us begin with the complexity classes P and NC. Recall that the complexity class P consists of all decision problems that can be solved by a deterministic Turing machine in polynomial time (PTIME), i.e., in $n^{O(1)}$ time, where $n$ is the size of the input; the parallel complexity class NC (the abbreviation of Nick's Class) consists of all decision problems that can be solved by taking $\log^{O(1)}n$ time on a PRAM with $n^{O(1)}$ processors \cite{greenlaw1995limits}.

Following the convention of complexity theory, we use a finite alphabet $\Sigma$ of symbols to encode both data and queries. In this way, a database as well as a query can be encoded into a string in $\Sigma^*$ with necessary delimiters. Recall that a language $S$ of pairs is a subset of $\Sigma^*\times\Sigma^*$, and one can use $S$ to encode a class $\mathcal{Q}$ of Boolean queries such that for each $\langle{}D,Q\rangle\in{}S$, $Q$ is a query in $\mathcal{Q}$, $D$ is a database on which $Q$ is defined, and $Q(D)$ returns true. We say that a language $S$ of pairs is in NC if it is in NC to decide whether a pair $\langle{}D,Q\rangle$ belongs to $S$ \cite{fan2013making}. Further, to denote tractable queries with preprocessing, Fan et al. introduced the following notion.

\begin{definition}[\cite{fan2013making}]\label{definition_tractable_old}
A language $S$ of pairs is \textit{$\sqcap$-tractable} if there exist a PTIME preprocessing function $\Pi:\Sigma^*\rightarrow\Sigma^*$ and a language $S'$ of pairs, which is in NC, such that for all $D,Q\in\Sigma^*$,
$$\langle{}D,Q\rangle\in{}S \mbox{ iff }\langle{}\Pi(D),Q\rangle\in{}S'.$$
A class $\mathcal{Q}$ of Boolean queries is called \textit{$\sqcap$-tractable} if the language $S_\mathcal{Q}$ of pairs for $\mathcal{Q}$ is $\sqcap$-tractable.
We use $\sqcap{}T_Q^0$ to denote the set of all $\sqcap$-tractable classes of Boolean queries.
\end{definition}

Before extending the notion of $\sqcap$-tractability to decision problems, we need a concept of factorization.
\begin{definition}[\cite{fan2013making}]
\label{definition_factorization}
We say that a language $L$ \textit{can be factored} if there exist three NC computable functions $\pi_1(\cdot)$, $\pi_2(\cdot)$, and $\rho(\cdot,\cdot)$ such that $\rho(\pi_1(x),\pi_2(x))=x$ for all $x\in{}L$. We refer to $\Upsilon=(\pi_1,\pi_2,\rho)$ and $S_{(L,\Upsilon)}=\{\langle\pi_1(x),\pi_2(x)\rangle|x\in{}L\}$ as a \textit{factorization} of $L$ and the language of pairs for $(L,\Upsilon)$, respectively.
\end{definition}

The following definition tells us when decision problems can be made $\sqcap$-tractable.

\begin{definition}[\cite{fan2013making}]\label{definition_makeproblemtractable_old}
A decision problem $L$ \textit{can be made $\sqcap$-tractable} if there exists a factorization $\Upsilon=(\pi_1,\pi_2,\rho)$ of $L$ such that the language $S_{(L,\Upsilon)}$ of pairs is $\sqcap$-tractable. We use $\sqcap{}T_P$ to denote the set of all decision problems that can be made $\sqcap$-tractable.
\end{definition}

When a class of Boolean queries is not $\sqcap$-tractable, it is possible to make it $\sqcap$-tractable by changing its data and query parts \cite{fan2013making}. To this end, for any given class $\mathcal{Q}$ of Boolean queries, let us construct a decision problem by connecting its data and query parts with a delimiter $\#$:
$$L_\mathcal{Q}=\{D\#Q|\langle{}D,Q\rangle\in{}S_{\mathcal{Q}}\},$$
where $S_{\mathcal{Q}}$ is the language of pairs for $\mathcal{Q}$ and $\#$ is a symbol that has not been used before.

\begin{definition}[\cite{fan2013making}]\label{definition_makequerytractable_old}
A class $\mathcal{Q}$ of Boolean queries \textit{can be made $\sqcap$-tractable} if the decision problem $L_\mathcal{Q}$ for $\mathcal{Q}$ can be made $\sqcap$-tractable. We use $\sqcap{}T_Q$ to denote the set of all classes of Boolean queries that can be made $\sqcap$-tractable.
\end{definition}

Intuitively, $\mathcal{Q}$ is in $\sqcap{}T_Q$ if it can be re-factorized by some factorization $\Upsilon$ such that $S_{(L_{\mathcal{Q}},\Upsilon)}$ is $\sqcap$-tractable.

Furthermore, Fan et al. introduced the following form of reductions to specify transformations from one problem to another problem in $\sqcap{}T_P$.

\begin{definition}[\cite{fan2013making}]\
\begin{enumerate}
\item[$(1)$]
A decision problem $L_1$ is said to be \textit{NC-factor reducible} to another decision problem $L_2$,
denoted by $L_{1}\leq_{fa}^{NC}L_2$, if there exist factorizations $\Upsilon_1=(\pi_1^1,\pi_2^1,\rho_1)$ of $L_1$ and $\Upsilon_2=(\pi_1^2,\pi_2^2,\rho_2)$ of $L_2$, and two NC functions $\alpha(\cdot)$ and $\beta(\cdot)$, such that for all $D,Q\in\Sigma^*$,
$\langle{}D,Q\rangle\in{}S_{(L_1,\Upsilon{}_1)}$ iff $\langle\alpha(D),\beta(Q)\rangle\in{}S_{(L_2,\Upsilon{}_2)}$.
\item[$(2)$]
Given two classes $\mathcal{Q}_1$ and $\mathcal{Q}_2$ of Boolean queries, we say that $\mathcal{Q}_1$ is \textit{NC-factor reducible} to $\mathcal{Q}_2$, denoted by $\mathcal{Q}_{1}\leq_{fa}^{NC}\mathcal{Q}_2$, if $L_{\mathcal{Q}_1}\leq_{fa}^{NC}L_{\mathcal{Q}_2}$,
where $L_{\mathcal{Q}_1}$ and $L_{\mathcal{Q}_2}$ are the decision problems for $\mathcal{Q}_1$ and $\mathcal{Q}_2$, respectively.
\end{enumerate}
\end{definition}

It has been proven in \cite{fan2013making} that the reducibility relations $\leq_{fa}^{NC}$ for both decision problems and classes of Boolean queries are transitive and compatible in $\sqcap{}T_P$ and $\sqcap{}T_Q$, respectively. Notably, by using $\leq_{fa}^{NC}$ to relate a problem to the entire complexity class $\sqcap{}T_P$, Fan et al. showed that $\sqcap{}T_P$ has complete problems, which results in that all problems in P and all classes of Boolean queries in PTIME can be made $\sqcap$-tractable \cite{fan2013making}.

For $\sqcap{}T_Q^0$, the set of $\sqcap$-tractable classes of Boolean queries, Fan et al. introduced another NC reduction $\leq_{F}^{NC}$.
\begin{definition}[\cite{fan2013making}]
\label{definition_ncfreduction}
A language $S_1$ of pairs is said to be \textit{F-reducible} to another language $S_2$ of pairs,
denoted by $S_1\leq_{F}^{NC}S_2$, if there exist NC functions $\alpha(\cdot)$ and $\beta(\cdot)$, such that for all $D,Q\in\Sigma^*$,
$\langle{}D,Q\rangle\in{}S_1$ iff $\langle\alpha(D),\beta(Q)\rangle\in{}S_2$.

A class $\mathcal{Q}_1$ of Boolean queries is said to be \textit{F-reducible} to another class $\mathcal{Q}_2$ of Boolean queries,
denoted by $\mathcal{Q}_1\leq_{F}^{NC}\mathcal{Q}_2$, if $S_1\leq_{F}^{NC}S_2$, where $S_1$ and $S_2$ are the language of pairs for $\mathcal{Q}_1$ and $\mathcal{Q}_2$, respectively.
\end{definition}

This reduction is transitive and compatible in $\sqcap{}T_Q^0$ \cite{fan2013making}.
However, the complete problem for $\sqcap{}T_Q^0$ under $\leq_{F}^{NC}$ is hard to find.
The existence of such a complete problem is close related to the open problem whether $\text{P}=\text{NC}$.
In addition, by considering a given class of Boolean queries, it has been proven that $\sqcap{}T_Q^0$ is properly contained in P, unless $\text{P}=\text{NC}$ \cite{fan2013making}.

\section{$\sqcap'$-Tractable Queries}
\label{section_tractablequeries}
In this section, we first introduce the notion of $\sqcap'$-tractability and explain its difference from $\sqcap$-tractability in Section \ref{section_tractable}, and then show how we can make problems and classes of Boolean queries $\sqcap'$-tractable via factorization with constant redundancy in Sections \ref{section_makeproblem} and \ref{section_makequery}, respectively.

Let us begin with the following convention. In our framework, we only study the languages of pairs with the following property (we will call it \textit{short-query property} in the rest of the paper): for all $\langle{}D,Q\rangle\in{}S$,
\begin{equation}
\label{equation_shortquery}
  |Q|\in{}O(\text{polylog}(|D|)).
\end{equation}

This restriction is rather reasonable, since most of the queries that we could use in practice and study in academe are much shorter than the database. For example, in the online search engine Google, the length of search query is strictly limited.
One search query can only have at most 150 query terms, which is at most as long as 128 characters \cite{google2014}.
Clearly, such a query is extremely short when compared to the large set of websites.
For academic world, on can refer to BDS problem in Example~\ref{example_bds}.
In the problem, the data part is a whole undirected graph while the query is only two vertexes.
Clearly, the query is also extremely short in this example.
Furthermore, we remark that the language $S_{\text{CVP}}$ of pairs used in $\sqcap$-tractability to prove the separation of $\sqcap{}T_Q^0$ and P does not meet the condition (Proof of Theorem 9 in \cite{fan2013making}).

\subsection{$\sqcap'$-Tractability}
\label{section_tractable}

In this section, we will explain the concept of $\sqcap'$-tractability and illustrate it by two examples.
\begin{definition}
A language $S$ of pairs, satisfying $|Q|\in{}O(\text{polylog}(|D|))$, is \textit{$\sqcap'$-tractable}
if there exist a preprocessing function $\Pi(\cdot)$ and a language $S'$ of pairs which is in NC, such that for all $D,Q\in\Sigma^*$,
\begin{enumerate}
  \item[$(1)$] $\langle{}D,Q\rangle\in{}S$ iff $\langle{}\Pi(D),Q\rangle\in{}S'$,
  \item[$(2)$] $\Pi$ is a PTIME function and for any $x$, $|\Pi(x)|\in{}O(\text{polylog}(|x|))$.
\end{enumerate}
We say that a class $\mathcal{Q}$ of Boolean queries is \textit{$\sqcap'$-tractable} if $S_\mathcal{Q}$, the language of pairs for $\mathcal{Q}$, is $\sqcap'$-tractable.
We define complexity class $\sqcap'T_Q^0$ to denote all $\sqcap'$-tractable classes of Boolean queries.
\end{definition}

On the definition, we have the following two remarks.
\begin{remark}
Compared to Definition~\ref{definition_tractable_old}, we place an interesting condition on the preprocessing function $\Pi$ that the size of its output is required not to be long, at most poly-logarithmic of its input.
As mentioned in Introduction, we add this condition to simulate the technologies used to deal big data, where they utilize a preprocessing function to generate a small database that returns the same result as the original big database.
Therefore, it is easy to see that $\sqcap'T_Q^0$ is contained in $\sqcap{}T_Q^0$.
We will disclose that the containment is proper, namely $\sqcap'T_Q^0\subsetneq\sqcap{}T_Q^0$, in Section~\ref{section_nofactorization}.
\end{remark}
\begin{remark}
Let us focus on the condition that $\langle{}D,Q\rangle\in{}S$ iff $\langle{}\Pi(D),Q\rangle\in{}S'$.
After we get a relative small database, we gain the pair $\langle{}\Pi(D),Q\rangle$.
If queries are too long, say as long as $\Theta(|D|)$, then the size of $\Pi(D)$ is negligible when compared to $Q$, and our length restriction of function $\Pi$ will be ineffective.
This is why we need the short-query property stated in Equation (\ref{equation_shortquery}).
\end{remark}

We now illustrate our new concept of $\sqcap'$-tractability by two examples.

\begin{example}
\label{example_wordstatistics}
Consider the following class $\mathcal{Q}$ of Boolean queries: A query $Q=(p,k)\in\mathcal{Q}$ on a full-length novel $\mathcal{N}$ is to find whether this preposition $p$ appears at least $k$ times in this novel $\mathcal{N}$.

We know that the usage frequency of preposition is quite different in different writers' novels.
So it is an effective way to determine the author of an anonymous novel by adding up preposition quantities.
This is the Boolean form of this query.
We are going to show that this query $\mathcal{Q}$ is $\sqcap'$-tractable.

First we get the language $S_\mathcal{Q}$ of pairs for $\mathcal{Q}$ as follow:
\begin{equation*}
  S_\mathcal{Q}=\{\langle\mathcal{N},(p,k)\rangle|\mathcal{N}\text{ is a novel and preposition }p\text{ appears at least }k\text{ times in }\mathcal{N}\}.
\end{equation*}
We know that preposition words are finite in English.
Supposing that there are totally $m$ preposition words, we construct a function $\Pi(\cdot)$ as follows: $\Pi(\mathcal{N})$ counts each preposition word in the novel $\mathcal{N}$ and returns these $m$ numbers.
We construct a language $S'$ of pairs in the following way:
\begin{equation*}
  S'=\{\langle\mathcal{L},(p,k)\rangle|\mathcal{L}\text{ is a list of }m\text{ numbers and its number for preposition }p\text{ is }\geq{}k\}.
\end{equation*}

It is easy to see that $\Pi(\cdot)$ is a PTIME function, $S'$ is in NC, and $\langle\mathcal{N},(p,k)\rangle\in{}S_\mathcal{Q}$ iff
$\langle\Pi(\mathcal{N}),(p,k)\rangle\in{}S'$. Next we will show that $\Pi(\cdot)$ satisfies the property $|\Pi(x)|\in{}O(\text{polylog}(|x|))$.
Suppose that $\mathcal{N}$ has $n$ words totally.
Then $|\mathcal{N}|\in\Theta(n)$, and each preposition word appears at most $n$ times.
We can store all these $m$ quantities by $m\text{log}(n)$ bits, i.e., the size of $\mathcal{L}$ is at most $m\text{log}(n)$.
Since $m$ is a constant number, we have $|\mathcal{L}|\in{}O(\text{polylog}(|\mathcal{N}|))$.
Consequently, $S_\mathcal{Q}$ is $\sqcap'$-tractable and $\mathcal{Q}\in\sqcap'T_Q^0$.
\end{example}

The next example is the query version of Circuit Value Problem (CVP) \cite{greenlaw1995limits}.
The example plays an important role in the subsequent analysis about our new complexity classes.

\begin{example}
\label{example_cvp}
Consider the query version of CVP:
A query $Q=\epsilon\in\mathcal{Q}_\text{CVP}$ on an instance $q$ of CVP is to find whether the output of $q$ is true.

An instance of CVP is an encoding $\langle\alpha\rangle$ of a Boolean circuit $\alpha$, inputs $x_1,x_2,\ldots,x_n$, and a designated output $y$.
Informally, a Boolean circuit $\alpha$ is a directed acyclic graph, in which a node can be
\begin{enumerate}
\item an input node $x_i$ for $i\in[1,n]$, with in-degree $0$,
\item an output node $y$ with out-degree 0,
\item a gate denoting a Boolean operator, e.g., $\neg,\wedge,\vee$, which applies to its input (child) and feeds the result as an input for its parent node.
\end{enumerate}
It is a Boolean function that takes inputs $x_1,x_2,\ldots,x_n$ and returns the truth value of $y$. Its encoding $\langle\alpha\rangle$ is a sequence of tuples, one for each node in the directed acyclic graph.

It has been known that CVP is a P-complete problem \cite{greenlaw1995limits}, so there exists a PTIME function to solve CVP.
We set $\Pi$ to be this function, i.e., $\Pi(q)$ returns $1$ if the output of $q$ is true and returns $0$, otherwise.
We set the language $S'$ of pairs to be $S'=\{\langle{}1,\epsilon\rangle\}$.
It is obvious that $S'$ is in NC, $\Pi$ is a PTIME function with polylog-size output, and
$\langle{}q,\epsilon\rangle\in{}S_\text{CVP}$ iff $\langle{}\Pi(q),\epsilon\rangle\in{}S'$.
So $\mathcal{Q}_\text{CVP}$ is also $\sqcap'$-tractable and $\mathcal{Q}_\text{CVP}\in\sqcap'T_Q^0$.

\end{example}

\subsection{Making problems $\sqcap'$-tractable}
\label{section_makeproblem}

Similar to the $\sqcap$-tractability, we are going to extend the notion of $\sqcap'$-tractability to decision problems in this section.

It is also necessary to factorize a decision problem into a data part and a query part. Let us begin with the following notion. Although we still use the term of factorization, it is imposed on an interesting restriction.

\begin{definition}
We say that a language $L$ has a \textit{factorization with constant redundancy} (or simply cr-factorization) if there exist three NC functions $\pi_1(\cdot)$, $\pi_2(\cdot)$, $\rho(\cdot,\cdot)$, and one constant number $c$ such that for all $x\in{}L$,
\begin{enumerate}
\item[$(1)$] $\rho(\pi_1(x),\pi_2(x))=x$,
\item[$(2)$] $|\pi_1(x)|+|\pi_2(x)|\leq|x|+c$,
\item[$(3)$] $|\pi_2(x)|\in{}O(\text{polylog}(|\pi_1(x)|))$.
\end{enumerate}
We refer to $\Upsilon=(\pi_1,\pi_2,\rho,c)$ as a factorization of $L$ with constant redundancy and refer to $S_{(L,\Upsilon)}=\{\langle\pi_1(x),\pi_2(x)\rangle|x\in{}L\}$ as the language of pairs for $(L,\Upsilon)$.
\end{definition}

Intuitively, such a factorization with constant redundancy can separate each instance $x$ of $L$ into data part $\pi_1(x)$ and query part $\pi_2(x)$.
The function $\rho(\cdot,\cdot)$ is an inverse function that can restore the original instance $x$ to make sure that no information is lost in the producer of factorization.
Compared to Definition \ref{definition_factorization}, the functions $\pi_1$, $\pi_2$, and $\rho$ have the same meanings, but the constant number $c$ is a new component, which is a restriction on redundancy,
and the size restriction of $\pi_1(x)$ and $\pi_2(x)$ is also new added.
Since we only study the queries with short-query property stated in Equation~(\ref{equation_shortquery}) and we separate the instance $x$ into
$\pi_1(x)$ and $\pi_2(x)$ to treat them as data part and query part, it is nature to consider restriction (3) here.

The reasons why we consider the constant number $c$ are as follows.
First, comparing the information volume of $\pi_1(x)$, $\pi_2(x)$, and $x$, we can see that all information in $x$ has been arranged into $\pi_1(x)$ and $\pi_2(x)$, otherwise $\rho$ cannot work.
Additionally, we also need to require that the sizes of $\pi_1(x)$ and $\pi_2(x)$ should not be too large, because in the context of big data, even linear-increase is expensive. We have to limit the redundancy information at a constant level, in case that the factorization affects the efficiency of preprocessing.
Second, only given restriction (3), we cannot limit the size of $\pi_2(x)$ when compared to the size of $x$.
With the help of $c$,
we can get that $|\pi_2(x)|$ is also in $O(\text{polylog}(|x|))$, as shown in the following proposition.
This proposition will facilitate the subsequent investigations of our complexity classes.

\begin{proposition}
\label{proposition_pi2size}
  Given a language $L$ and a factorization of $L$ with constant redundancy $\Upsilon=(\pi_1,\pi_2,\rho,c)$,
  it holds that $|\pi_2(x)|\in{}O(\text{polylog}(|x|))$ for any $x\in{}L$.
\end{proposition}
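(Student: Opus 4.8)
The plan is to derive the bound on $|\pi_2(x)|$ by combining conditions (2) and (3) of the cr-factorization, treating the two regimes of the relative sizes of $\pi_1(x)$ and $\pi_2(x)$ separately. First I would note that condition (3) already gives $|\pi_2(x)|\in O(\text{polylog}(|\pi_1(x)|))$, so the only thing standing between us and the conclusion is that the bound is stated in terms of $|\pi_1(x)|$ rather than $|x|$. Since $|\pi_1(x)|\leq|x|+c$ by condition (2) (as $|\pi_2(x)|\geq 0$), and $\text{polylog}$ is monotone, we immediately get $|\pi_2(x)|\in O(\text{polylog}(|x|+c))=O(\text{polylog}(|x|))$, the last equality holding because $c$ is a fixed constant and $\log(|x|+c)\in O(\log|x|)$ for large $|x|$.

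The one subtlety I would watch for is the degenerate case where $|\pi_1(x)|$ is very small (say bounded by a constant) while $x$ itself is large. In that situation condition (3) bounds $|\pi_2(x)|$ by $O(\text{polylog}(\text{const}))$, i.e. a constant, which is certainly in $O(\text{polylog}(|x|))$; so the conclusion still holds, but I would make sure the argument does not secretly require $|\pi_1(x)|\to\infty$. The cleanest way to sidestep this is to avoid leaning on condition (3) in isolation and instead observe directly that condition (2) forces $|\pi_2(x)|\leq|x|+c-|\pi_1(x)|\leq|x|+c$, which trivially yields a linear bound; the real content is upgrading this linear bound to a polylogarithmic one, and that is exactly what condition (3) supplies once we substitute $|\pi_1(x)|\leq|x|+c$.

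Concretely, I would write $|\pi_2(x)|\leq d\cdot(\log|\pi_1(x)|)^{k}$ for some constants $d,k$ witnessing condition (3), then bound $\log|\pi_1(x)|\leq\log(|x|+c)$ using condition (2), and finally absorb the additive constant $c$ into the asymptotic notation. The main (and only mildly delicate) obstacle is purely notational bookkeeping: ensuring that the constants from the two separate conditions are handled uniformly across all $x\in L$, rather than allowing them to depend on $x$. Since both witnessing constants are fixed by the factorization $\Upsilon=(\pi_1,\pi_2,\rho,c)$ and do not vary with the instance, this is straightforward, and I do not anticipate any genuine difficulty beyond keeping the chain of inequalities tidy.
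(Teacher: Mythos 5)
Your proof is correct and follows essentially the same route as the paper's: extract $|\pi_1(x)|\leq|x|+c$ from condition (2), substitute it into the polylog bound of condition (3), and absorb the additive constant $c$ into the asymptotic notation. The extra care you take with the degenerate case of constant-size $\pi_1(x)$ and with the uniformity of the witnessing constants is sound but not needed beyond what the paper's two-line argument already establishes.
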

\begin{proof}
  Since $|\pi_1(x)|+|\pi_2(x)|\leq|x|+c$, we have that $|\pi_1(x)|\leq|x|+c$.
  Additionally, we have that $|\pi_2(x)|\in{}O(\text{polylog}(|\pi_1(x)|))$, so $|\pi_2(x)|\in{}O(\text{polylog}(|x|+c))=O(\text{polylog}(|x|))$, as desired.
\end{proof}

Similar to Definition \ref{definition_makeproblemtractable_old}, we have the following.

\begin{definition}
We say that a decision problem $L$ \textit{can be made $\sqcap'$-tractable} if there exists a factorization of $L$ with constant redundancy $\Upsilon=(\pi_1,\pi_2,\rho,c)$ such that the language $S_{(L,\Upsilon)}$ of pairs is $\sqcap'$-tractable.
We write $\sqcap{}'T_P$ for the set of all decision problems that can be made $\sqcap'$-tractable.
\end{definition}

Let us illustrate the above concepts with an example.

\begin{example}
\label{example_bds}
Consider the decision problem of Breadth-Depth Search (BDS) \cite{greenlaw1995limits}:
\begin{itemize}
\item Input: An undirected graph $G=(V,E)$ with a number on every node, and a pair $(u,v)$ of nodes in $V$.
\item Question: Is $u$ visited before $v$ in the breadth-depth search of $G$ induced by the vertex numbering?
\end{itemize}
A breadth-depth search starts at a node $s$ with the minimum number, and visits all its children, pushing them onto a stack in the reverse order induced by the vertex numbering as the search proceeds. After all children of $s$ are visited, the search continues with the node on the top of the stack, which plays the role of $s$.

Now, we are going to show that BDS can be made $\sqcap'$-tractable. To this end, we define a cr-factorization $\Upsilon=(\pi_1,\pi_2,\rho,c)$ of BDS: For all $x=(G,(u,v))\in\text{BDS}$, set $\pi_1(x)=(G,(u,v))$ and $\pi_2(x)=\epsilon$;
for any graph $G$ and any pair $(u,v)$ of nodes, let $\rho((G,(u,v)),\epsilon)=(G,(u,v))$, and then set $c=0$.

Then we can obtain that
\begin{equation*}
\begin{aligned}
S_{(\text{BDS},\Upsilon)}=&\{\langle{}(G,(u,v)),\epsilon\rangle|G\text{ is a graph and }u\text{ is visited before }v\text{ in the breadth-depth search of }G\}.
\end{aligned}
\end{equation*}

Just like CVP in Example~\ref{example_cvp}, BDS is also a P-complete problem \cite{greenlaw1993breadth},
so we can also get a PTIME function to solve BDS.
We construct the preprocessing function $\Pi$ to be this function and the new language $S'$ of pairs to be $\{\langle{}1,\epsilon\rangle\}$, in the same way as Example~\ref{example_cvp},
and also conclude that:
$\langle{}(G,(u,v)),\epsilon\rangle\in{}S_{(\text{BDS},\Upsilon)}$ iff $\langle{}\Pi(G,(u,v)),\epsilon\rangle\in{}S'$.
So $S_{\text{BDS}}$ is $\sqcap'$-tractable, and BDS can be made $\sqcap'$-tractable, i.e., $\text{BDS}\in\sqcap'T_P$.
\end{example}

\subsection{Making classes of Boolean queries $\sqcap'$-tractable}
\label{section_makequery}
Analogue to the $\sqcap$-tractability, for a class of Boolean queries that is not $\sqcap'$-tractable itself, it is possible to make the class $\sqcap'$-tractable
by changing its data and query parts.
We discuss how can we make queries $\sqcap'$-tractable in this section.

Given a class $\mathcal{Q}$ of Boolean queries, we construct a decision problem
$$L_\mathcal{Q}=\{D\#Q|\langle{}D,Q\rangle\in{}S_{\mathcal{Q}}\},$$
where $S_{\mathcal{Q}}$ is the language of pairs for $\mathcal{Q}$. The problem $L_\mathcal{Q}$ can be referred to as the decision problem for $\mathcal{Q}$ and can be stated as:
\begin{itemize}
\item Input: A string in $\Sigma^*$ with the form $D\#Q$.
\item Question: Does $Q(D)$ return true?
\end{itemize}

Corresponding to Definition~\ref{definition_makequerytractable_old}, we have the following definition.
\begin{definition}
We say that a class $\mathcal{Q}$ of Boolean queries \textit{can be made $\sqcap'$-tractable} if the decision problem $L_\mathcal{Q}$ for $\mathcal{Q}$ can be made $\sqcap'$-tractable. We use $\sqcap{}'T_Q$ to denote the set of all classes of Boolean queries that can be made $\sqcap'$-tractable.
\end{definition}

\begin{figure}
\centerline{\includegraphics[width=7cm, trim=6cm 1.5cm 8cm 2cm, clip=true]{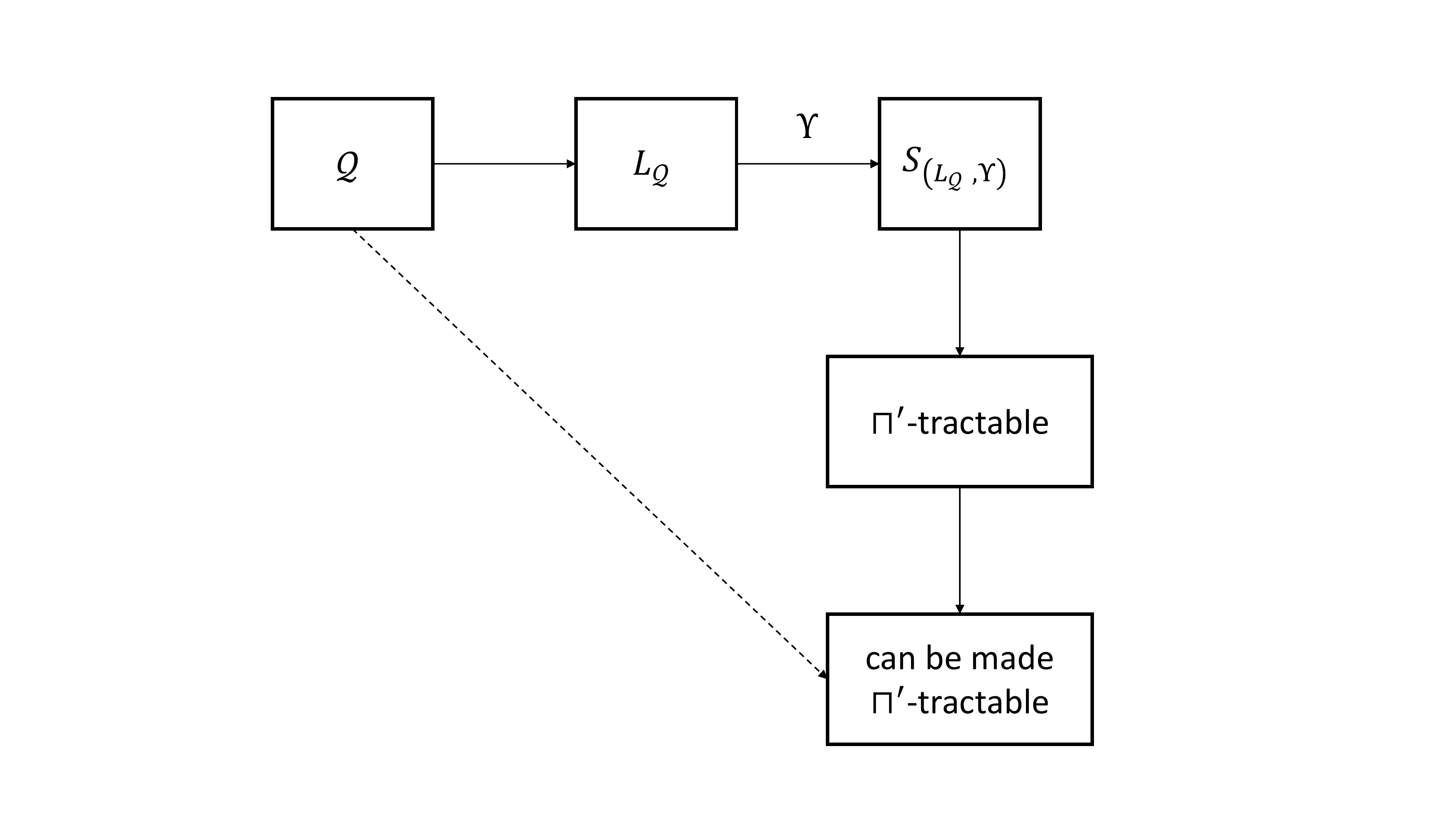}}
\caption{Making $\mathcal{Q}$ $\sqcap'$-tractable.}
\label{figure}
\end{figure}

We know that $L_\mathcal{Q}$ can be made $\sqcap'$-tractable means that there exists a cr-factorization $\Upsilon$ such that $S_{(L_\mathcal{Q},\Upsilon)}$ is $\sqcap'$-tractable.
However, it is possible that $S_{(L_\mathcal{Q},\Upsilon)}$ is not the language of pairs for $\mathcal{Q}$, that is, $\mathcal{Q}$ is in $\sqcap{}'T_Q$ if it can be re-factorized by some cr-factorization $\Upsilon$ such that $S_{(L_\mathcal{Q},\Upsilon)}$ is $\sqcap'$-tractable.
Figure~\ref{figure} depicts the process.

We end this section with an example arising from Example~\ref{example_bds}.

\begin{example} \label{example_bdsquery}
Consider the query version of BDS in Example~\ref{example_bds}:
A query $Q=(u,v)\in\mathcal{Q}_\text{BDS}$ on a graph $G$ is to find whether $u$ is visited before $v$ in the breadth-depth search of $G$ induced by the vertex numbering.

We are going to show that $\mathcal{Q}_\text{BDS}$ can be made $\sqcap'$-tractable. When we get the decision problem
\begin{equation*}
\begin{aligned}
L_{\mathcal{Q}_\text{BDS}}=&\{G\#(u,v)|G\text{ is a graph and }u\text{ is visited before }v\text{ in the breadth-depth search of }G\}
\end{aligned}
\end{equation*}
for $\mathcal{Q}_\text{BDS}$,  we will discover that it is quite similar to our decision problem in Example~\ref{example_bds}. So we can make a little change on the cr-factorization defined in Example~\ref{example_bds} such that it becomes efficient to demonstrate that $\mathcal{Q}_\text{BDS}$ can be made $\sqcap'$-tractable.
We define a cr-factorization $\Upsilon'=(\sigma_1,\sigma_2,\rho',c')$ of $L_{\mathcal{Q}_\text{BDS}}$:
For all $y=G\#(u,v)\in{}L_{\mathcal{Q}_\text{BDS}}$, set $\sigma_1(y)=(G,(u,v))$, $\sigma_2(y)=\epsilon$;
for any graph $G$ and any pair $(u,v)$ of nodes, let $\rho'((G,(u,v)),\epsilon)=G\#(u,v)$; and take $c'=-1$.
Then we find that $S_{(L_{\mathcal{Q}_\text{BDS}},\Upsilon')}$ equals $S_{(\text{BDS},\Upsilon)}$ in Example~\ref{example_bds}.
Since $S_{(\text{BDS},\Upsilon)}$ in Example~\ref{example_bds} is $\sqcap'$-tractable,
$S_{(L_{\mathcal{Q}_\text{BDS}},\Upsilon')}$ is also $\sqcap'$-tractable and $\mathcal{Q}_\text{BDS}$ can be made $\sqcap'$-tractable, i.e., $\mathcal{Q}_\text{BDS}\in\sqcap{}'T_Q$.

\end{example}

\section{NC-fcr Reducibility}
\label{section_reducibility}
Having defined the complexity classes $\sqcap{}'T_P$, $\sqcap{}'T_Q$, and $\sqcap{}'T_Q^0$, we have several nature questions about them.
How large are these complexity classes? Is there an efficient solution to decide whether a given decision problem is in $\sqcap{}'T_P$ and whether a given class of Boolean queries is in $\sqcap{}'T_Q$ or $\sqcap{}'T_Q^0$?
In this section, we define a reduction and then show that it is compatible in our $\sqcap'T_P$ and $\sqcap'T_Q$.
We differ exploring the properties of $\sqcap'T_P$ and $\sqcap'T_Q$ to Section~\ref{section_sizecomplexityclass}
and the properties of $\sqcap{}'T_Q^0$ to Section~\ref{section_nofactorization}.
In Section~\ref{section_sizecomplexityclass}, we will introduce a complete problem such that we can decide whether a given decision problem is in $\sqcap{}'T_P$ by trying to reduce it to the complete problem, and similarly for classes of Boolean queries.

Due to we have adopted the new concept of factorization, we have to re-formalize a corresponding definition of reducibility.
\begin{definition}
\label{definition_ncfcrreduction}\ 
\begin{enumerate}
\item[$(1)$]
A decision problem $L_1$ is said to be \textit{NC-fcr reducible} to another decision problem $L_2$,
denoted by $L_{1}\leq_{fcr}^{NC}L_2$, if there exist factorizations $\Upsilon_1=(\pi_1^1,\pi_2^1,\rho_1,c_1)$ of $L_1$ and $\Upsilon_2=(\pi_1^2,\pi_2^2,\rho_2,c_2)$ of $ L_2 $ with constant redundancy, and two NC functions $\alpha(\cdot)$ and $\beta(\cdot)$, such that for all $D,Q\in\Sigma^*$,
$\langle{}D,Q\rangle\in{}S_{(L_1,\Upsilon{}_1)}$ iff $\langle\alpha(D),\beta(Q)\rangle\in{}S_{(L_2,\Upsilon{}_2)}$.
\item[$(2)$]
For two classes $\mathcal{Q}_1$ and $\mathcal{Q}_2$ of Boolean queries, we say that $\mathcal{Q}_1$ is \textit{NC-fcr reducible} to $\mathcal{Q}_2$, denoted by $\mathcal{Q}_1\leq_{fcr}^{NC}\mathcal{Q}_2$, if $L_{\mathcal{Q}_1}\leq_{fcr}^{NC}L_{\mathcal{Q}_2}$,
where $L_{\mathcal{Q}_1}$ and $L_{\mathcal{Q}_2}$ are decision problems for $\mathcal{Q}_1$ and $\mathcal{Q}_2$, respectively.
\end{enumerate}
\end{definition}

As mentioned in Section~\ref{section_background}, Fan et al. \cite{fan2013making} showed the transitive and compatible properties of the reducibility relation $\leq_{fa}^{NC}$ for both decision problems and classes of Boolean queries.
For our reducibility $\leq_{fcr}^{NC}$ in the $\sqcap'$-tractability, it turns out that the properties still hold. Unfortunately, the proof methods in \cite{fan2013making} are not applicable to our framework. Let us state the properties and update the proofs.

\begin{theorem}
\label{theorem_transitive}
The NC-fcr reduction $\leq_{fcr}^{NC}$ for decision problems is transitive, i.e., if $L_{1}\leq_{fcr}^{NC}L_2$ and $L_{2}\leq_{fcr}^{NC}L_3$, then $L_{1}\leq_{fcr}^{NC}L_3$.
\end{theorem}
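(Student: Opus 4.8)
The plan is to reduce transitivity to the composition of ordinary NC many-one reductions, which neatly sidesteps the fact that the two hypotheses factor $L_2$ in two a priori unrelated ways. First I would unpack the data. The reduction $L_1\leq_{fcr}^{NC}L_2$ supplies cr-factorizations $\Upsilon_1=(\pi_1^1,\pi_2^1,\rho_1,c_1)$ of $L_1$ and $\Upsilon_2=(\pi_1^2,\pi_2^2,\rho_2,c_2)$ of $L_2$ and NC functions $\alpha_1,\beta_1$ with $\langle D,Q\rangle\in S_{(L_1,\Upsilon_1)}$ iff $\langle\alpha_1(D),\beta_1(Q)\rangle\in S_{(L_2,\Upsilon_2)}$; the reduction $L_2\leq_{fcr}^{NC}L_3$ supplies a second, generally different cr-factorization $\Upsilon_2'=(\sigma_1,\sigma_2,\rho_2',c_2')$ of $L_2$, a cr-factorization $\Upsilon_3=(\pi_1^3,\pi_2^3,\rho_3,c_3)$ of $L_3$, and NC functions $\alpha_2,\beta_2$ with the analogous equivalence. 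The naive composition stalls because passing from a valid $\Upsilon_2$-pair to a valid $\Upsilon_2'$-pair means reconstructing $z=\rho_2(\alpha_1(D),\beta_1(Q))$ and re-factoring it via $\Upsilon_2'$, an operation that blends the data and query coordinates and so is not of the coordinatewise form $\langle\alpha,\beta\rangle$ the definition demands; and Fan et al.'s fix of padding the query part would break the constant-redundancy and polylog constraints.

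The tool I would isolate first is a membership characterization: for any cr-factorization $\Upsilon=(\pi_1,\pi_2,\rho,c)$ of a language $L$ and any strings $a,b$, one has $\langle a,b\rangle\in S_{(L,\Upsilon)}$ if and only if $z:=\rho(a,b)\in L$ together with the consistency equalities $\pi_1(z)=a$ and $\pi_2(z)=b$. This is immediate from the defining identity $\rho(\pi_1(x),\pi_2(x))=x$ on $L$, and it yields in particular the inclusion $L\subseteq\{x:\rho(\pi_1(x),\pi_2(x))=x\}$, so that on this faithful domain pair-membership and language-membership coincide. Since $\rho,\pi_1,\pi_2$ are all NC, the consistency equalities are NC-testable.

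Using this, I would assemble one NC function $\Phi$ and prove $x\in L_1$ iff $\Phi(x)\in L_3$. On input $x$, $\Phi$ computes $D=\pi_1^1(x),\,Q=\pi_2^1(x)$ and checks $\rho_1(D,Q)=x$; then $a=\alpha_1(D),\,b=\beta_1(Q),\,z=\rho_2(a,b)$ and checks $\pi_1^2(z)=a,\ \pi_2^2(z)=b$; then $a'=\sigma_1(z),\,b'=\sigma_2(z)$ and checks $\rho_2'(a',b')=z$; finally $a''=\alpha_2(a'),\,b''=\beta_2(b'),\,w=\rho_3(a'',b'')$ and checks $\pi_1^3(w)=a'',\ \pi_2^3(w)=b''$. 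If all checks pass, $\Phi$ outputs $w$; otherwise it outputs a fixed $w_0\notin L_3$. Chaining the characterization through the two reduction equivalences shows that, when every check passes, $x\in L_1\Leftrightarrow\langle D,Q\rangle\in S_{(L_1,\Upsilon_1)}\Leftrightarrow\langle a,b\rangle\in S_{(L_2,\Upsilon_2)}\Leftrightarrow z\in L_2\Leftrightarrow\langle a',b'\rangle\in S_{(L_2,\Upsilon_2')}\Leftrightarrow\langle a'',b''\rangle\in S_{(L_3,\Upsilon_3)}\Leftrightarrow w\in L_3$; and at the first failing check the corresponding pair is already excluded from its language of pairs (using the faithful-domain inclusion for the $\rho$-checks and the reduction equivalences for the consistency checks), which forces $x\notin L_1$, matching the output $w_0\notin L_3$. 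As a composition of finitely many NC functions and NC equality tests, $\Phi$ is NC.

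It then remains to re-dress $\Phi$ as an NC-fcr reduction. For this I would use the trivial identity factorizations $\Upsilon_1^{*}$ of $L_1$ and $\Upsilon_3^{*}$ of $L_3$ given by $\pi_1(x)=x,\ \pi_2(x)=\epsilon,\ \rho(u,\epsilon)=u,\ c=0$; each satisfies the three cr-factorization conditions at once and gives the clean languages of pairs $S_{(L_1,\Upsilon_1^{*})}=\{\langle x,\epsilon\rangle:x\in L_1\}$ and likewise for $L_3$. Taking $\alpha^{*}=\Phi$ and $\beta^{*}$ the identity, one verifies for all $D,Q$ that $\langle D,Q\rangle\in S_{(L_1,\Upsilon_1^{*})}$ iff $Q=\epsilon$ and $D\in L_1$, iff $\beta^{*}(Q)=\epsilon$ and $\Phi(D)\in L_3$, iff $\langle\alpha^{*}(D),\beta^{*}(Q)\rangle\in S_{(L_3,\Upsilon_3^{*})}$, which is exactly $L_1\leq_{fcr}^{NC}L_3$. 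I expect the genuine obstacle to lie in the middle step: showing that the NC consistency checks faithfully convert pair-membership under $\Upsilon_2$ into pair-membership under the unrelated $\Upsilon_2'$. The delicate point is that the factorization identity is postulated only on the languages themselves, so an off-language input can spuriously reconstruct into the language; it is precisely the consistency tests together with the inclusion $L\subseteq\{x:\rho(\pi_1(x),\pi_2(x))=x\}$ that rule this degeneracy out at every level of the chain.
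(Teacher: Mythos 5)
Your architecture is in one respect sharper than the paper's: the membership characterization ($\langle a,b\rangle\in S_{(L,\Upsilon)}$ iff $\rho(a,b)\in L$, $\pi_1(\rho(a,b))=a$, $\pi_2(\rho(a,b))=b$) together with the explicit consistency checks inside $\Phi$ is exactly what is needed to make the backward direction rigorous, a point the paper's proof dismisses with the phrase ``by rollback.'' The genuine gap is the line ``otherwise it outputs a fixed $w_0\notin L_3$'': such a $w_0$ need not exist, and when it does not, the entire strategy of routing the argument through an ordinary many-one reduction collapses, not merely this sentence. Concretely, take $L_1=0\Sigma^*$ and $L_2=L_3=\Sigma^*$. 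Factor $L_1$ by $\pi_1^1(x)=x$ with its first symbol deleted, $\pi_2^1(x)=\epsilon$, $\rho_1(u,\epsilon)=0u$, $c_1=0$, and factor $L_2$ and $L_3$ by the identity cr-factorization ($\pi_1=\mathrm{id}$, $\pi_2\equiv\epsilon$, $c=0$). Then $S_{(L_1,\Upsilon_1)}=S_{(L_2,\Upsilon_2)}=S_{(L_3,\Upsilon_3)}=\Sigma^*\times\{\epsilon\}$, so both hypotheses $L_1\leq_{fcr}^{NC}L_2$ and $L_2\leq_{fcr}^{NC}L_3$ hold with $\alpha,\beta$ the identity, and the conclusion $L_1\leq_{fcr}^{NC}L_3$ holds as well. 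But no function $\Phi$ of any complexity satisfies $x\in L_1$ iff $\Phi(x)\in L_3$, since $L_3$ has no non-member to receive $1\notin L_1$. So your intermediate claim (``$x\in L_1$ iff $\Phi(x)\in L_3$'') is unprovable from the hypotheses: NC-fcr reducibility does not imply NC many-one reducibility, precisely because a language of pairs $S_{(L,\Upsilon)}$ always has ill-formed non-members even when $L=\Sigma^*$, whereas $L$ itself may not.

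The repair is to stay at the level of pairs, which is what the paper does. Keep all of $\Phi$'s computations and consistency checks, but let the data map $\alpha$ output the string $a''@b''$ (with $@$ a fresh symbol) when every check passes and output $\epsilon$ otherwise, and pair this with the packed factorization $\Upsilon_3'$ of $L_3$ given by $\pi_1'(y)=\pi_1^3(y)@\pi_2^3(y)$, $\pi_2'(y)=\epsilon$, $\rho'(u@v,\epsilon)=\rho_3(u,v)$, $c'=c_3+1$; on the $L_1$ side the identity factorization suffices, with $\beta=\mathrm{id}$. Then $\langle a''@b'',\epsilon\rangle\in S_{(L_3,\Upsilon_3')}$ iff $\langle a'',b''\rangle\in S_{(L_3,\Upsilon_3)}$, while the failure output $\epsilon$, containing no $@$, is never in $S_{(L_3,\Upsilon_3')}$ --- so rejection can be signalled even when $L_3=\Sigma^*$. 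Your characterization lemma and check-chain then carry over verbatim and give a correct proof, at which point your argument essentially coincides with the paper's construction (which likewise packs $\pi_1^3@\pi_2^3$ into the data slot and sets the query part to $\epsilon$), with the added value that your consistency analysis supplies the justification the paper leaves implicit.
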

\begin{proof}
Given that $L_{1}\leq_{fcr}^{NC}L_2$ and $L_{2}\leq_{fcr}^{NC}L_3$, we have the cr-factorizations:
$\Upsilon_1=(\pi_1^1,\pi_2^1,\rho_1,c_1)$ of $L_1$,
$\Upsilon_2=(\pi_1^2,\pi_2^2,\rho_2,c_2)$ and $\Upsilon'_2=(\sigma_1^2,\sigma_2^2,\rho'_2,c'_2)$ of $L_2$,
$\Upsilon_3=(\pi_1^3,\pi_2^3,\rho_3,c_3)$ of $L_3$,
and four NC functions $\alpha_1(\cdot),\alpha_2(\cdot),\beta_1(\cdot),\beta_2(\cdot)$, such that for all $D,Q\in\Sigma^*$,
\begin{equation}
\label{equation_transitive1}
\langle{}D,Q\rangle\in{}S_{(L_1,\Upsilon_1)}
\text{ iff }\langle{}\alpha_1(D),\beta_1(Q)\rangle\in{}S_{(L_2,\Upsilon_2)},
\end{equation}
\begin{equation}
\label{equation_transitive2}
\langle{}D,Q\rangle\in{}S_{(L_2,\Upsilon'_2)}
\text{ iff }\langle{}\alpha_2(D),\beta_2(Q)\rangle\in{}S_{(L_3,\Upsilon_3)}.
\end{equation}

We now construct two cr-factorizations $\Upsilon'_1=(\sigma_1^1,\sigma_2^1,\rho'_1,c'_1)$ of $L_1$ and $\Upsilon'_3=(\sigma_1^3,\sigma_2^3,\rho'_3,c'_3)$ of $L_3$ by setting
\begin{eqnarray*}
\sigma_1^1(x)\!\!\!&=&\!\!\!\pi_1^1(x)@\pi_2^1(x), \\
 \sigma_2^1(x) \!\!\!&=&\!\!\! \epsilon, \\
  \sigma_1^3(x) \!\!\!&=&\!\!\!\pi_1^3(x)@\pi_2^3(x), \\
\sigma_2^3(x) \!\!\!&=&\!\!\!\epsilon,
\end{eqnarray*}
for all $x\in\Sigma^*$, setting
\begin{eqnarray*}
\rho'_1(x_1@x_2,\epsilon)\!\!\!&=&\!\!\!\rho_1(x_1,x_2), \\
\rho'_3(x_1@x_2,\epsilon)\!\!\!&=&\!\!\!\rho_3(x_1,x_2),
\end{eqnarray*}
for all $x_1,x_2\in\Sigma^*$, and taking $c'_1=c_1+1$ and $c'_3=c_3+1$, where the symbol $@$ is a new symbol that has never been used before. For convenience, we define two auxiliary functions
\begin{eqnarray*}
f(x@y)\!\!\!&=&\!\!\! x, \\
g(x@y)\!\!\!&=&\!\!\! y
\end{eqnarray*}
to extract the front and back parts of a string containing the symbol $@$.

For any $\langle{}D,Q\rangle\in{}S_{(L_1,\Upsilon'_1)}$, it follows by the definition of $\Upsilon'_1$ that
\begin{equation*}
Q=\epsilon\text{ and }\langle{}f(D),g(D)\rangle\in{}S_{(L_1,\Upsilon_1)}.
\end{equation*}
Hence, we see from Equation~(\ref{equation_transitive1}) that
\begin{equation*}
\langle{}\alpha_1(f(D)),\beta_1(g(D))\rangle\in{}S_{(L_2,\Upsilon_2)}.
\end{equation*}
It yields that
\begin{equation*}
\rho_2(\alpha_1(f(D)),\beta_1(g(D)))\in{}L_2,
\end{equation*}
according to the definition of $\Upsilon_2$. Setting $h(x)=\rho_2(\alpha_1(f(x)),\beta_1(g(x)))$ for simplicity, we get that $h(D)\in{}L_2$, which means that
\begin{equation*}
\langle{}\sigma_1^2(h(D)),\sigma_2^2(h(D))\rangle\in{}S_{(L_2,\Upsilon'_2)}
\end{equation*}
by the definition of $S_{(L_2,\Upsilon'_2)}$. It follows from Equation~(\ref{equation_transitive2}) that
\begin{equation*}
\langle{}\alpha_2(\sigma_1^2(h(D))),\beta_2(\sigma_2^2(h(D)))\rangle\in{}S_{(L_3,\Upsilon_3)}.
\end{equation*}
Further, by the definition of $\Upsilon'_3$, we obtain that
\begin{equation*}
\langle\alpha_2(\sigma_1^2(h(D)))@\beta_2(\sigma_2^2(h(D))),\epsilon\rangle\in{}S_{(L_3,\Upsilon'_3)}.
\end{equation*}
Defining
\begin{equation*}
\begin{aligned}
\alpha(x)&=\alpha_2(\sigma_1^2(h(x)))@\beta_2(\sigma_2^2(h(x))), \\
\beta(x)&=x,
\end{aligned}
\end{equation*}
we thus get that
$\langle\alpha(D),\beta(Q)\rangle\in{}S_{(L_3,\Upsilon'_3)}$. As a result, we have checked that $\langle\alpha(D),\beta(Q)\rangle\in{}S_{(L_3,\Upsilon'_3)}$ if $\langle{}D,Q\rangle\in{}S_{(L_1,\Upsilon'_1)}$. Conversely, there is no difficulty to derive $\langle{}D,Q\rangle\in{}S_{(L_1,\Upsilon'_1)}$ from $\langle\alpha(D),\beta(Q)\rangle\in{}S_{(L_3,\Upsilon'_3)}$ by rollback. Therefore, we have that $\langle{}D,Q\rangle\in{}S_{(L_1,\Upsilon'_1)}$ iff $\langle\alpha(D),\beta(Q)\rangle\in{}S_{(L_3,\Upsilon'_3)}$. It is easy to see that $\alpha(\cdot)$ and $\beta(\cdot)$ are NC functions. We thus obtain that $L_{1}\leq_{fcr}^{NC}L_3$, finishing the proof.
\end{proof}

After showing that $\leq_{fcr}^{NC}$ is transitive, we are now in a position to verify that $\leq_{fcr}^{NC}$ is compatible with $\sqcap{}'T_P$.
The significance of this property lies in that it shows us how to make a decision problem $\sqcap'$-tractable:
For a given decision problem $L$, we can try to find another decision problem $L'$ that can be made $\sqcap'$-tractable, and then prove that $L\leq_{fcr}^{NC}L'$.
The properties also tell us how to show that a decision problem $L$ cannot be made $\sqcap'$-tractable:
We may find another decision problem $L'$ that cannot be made $\sqcap'$-tractable, and then show that
$L'\leq_{fcr}^{NC}L$.
\begin{theorem}
\label{theorem_compatible}
The reduction $\leq_{fcr}^{NC}$ is compatible with $\sqcap{}'T_P$, that is, if $L_{1}\leq_{fcr}^{NC}L_2$ and $L_2$ can be made $\sqcap'$-tractable, then $L_1$ can also be made $\sqcap'$-tractable.
\end{theorem}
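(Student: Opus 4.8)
The plan is to mirror the construction used in the proof of Theorem~\ref{theorem_transitive}: collapse the factorization of $L_1$ supplied by the reduction into one whose query part is empty, and then pull the preprocessing that witnesses the $\sqcap'$-tractability of $L_2$ back through the reduction to witness the $\sqcap'$-tractability of $L_1$. Concretely, from $L_1\leq_{fcr}^{NC}L_2$ I have cr-factorizations $\Upsilon_1=(\pi_1^1,\pi_2^1,\rho_1,c_1)$ of $L_1$ and $\Upsilon_2=(\pi_1^2,\pi_2^2,\rho_2,c_2)$ of $L_2$ together with NC functions $\alpha,\beta$ satisfying $\langle D,Q\rangle\in S_{(L_1,\Upsilon_1)}$ iff $\langle\alpha(D),\beta(Q)\rangle\in S_{(L_2,\Upsilon_2)}$; and since $L_2$ can be made $\sqcap'$-tractable I have a cr-factorization $\Upsilon_2'=(\sigma_1^2,\sigma_2^2,\rho_2',c_2')$ of $L_2$, a PTIME function $\Pi$ with $|\Pi(x)|\in O(\text{polylog}(|x|))$, and an NC language $S'$ with $\langle D,Q\rangle\in S_{(L_2,\Upsilon_2')}$ iff $\langle\Pi(D),Q\rangle\in S'$. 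First I would define a cr-factorization $\Upsilon_1'=(\sigma_1^1,\sigma_2^1,\rho_1',c_1')$ of $L_1$ exactly as in Theorem~\ref{theorem_transitive}, setting $\sigma_1^1(x)=\pi_1^1(x)@\pi_2^1(x)$, $\sigma_2^1(x)=\epsilon$, $\rho_1'(x_1@x_2,\epsilon)=\rho_1(x_1,x_2)$, and $c_1'=c_1+1$, with $f,g$ extracting the two sides of $@$; checking cr-factorization properties (1)--(3) is routine, and the short-query property holds trivially because every query part equals $\epsilon$.

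Next I would build the witnesses for the $\sqcap'$-tractability of $S_{(L_1,\Upsilon_1')}$. For a well-formed data part $D=u@v$ set $a=\alpha(u)$, $b=\beta(v)$ and $y=\rho_2(a,b)$, and let $\Pi_1(D)$ output the triple $\langle\Pi(\sigma_1^2(y)),\sigma_2^2(y),t\rangle$, where $t$ is the single bit recording the outcome of the NC consistency checks $\rho_2'(\sigma_1^2(y),\sigma_2^2(y))=y$, $\pi_1^2(y)=a$, and $\pi_2^2(y)=b$ (a malformed $D$ is sent to a fixed rejecting token). Correspondingly I define $S_1'=\{\langle\langle w,d,1\rangle,\epsilon\rangle\mid\langle w,d\rangle\in S'\}$. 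That $\Pi_1$ is PTIME is clear, since it composes the NC functions $\alpha,\beta,\rho_2,\sigma_1^2,\sigma_2^2,\pi_1^2,\pi_2^2,\rho_2'$ with the PTIME function $\Pi$, and $S_1'$ is in NC because $S'$ is and the remaining parsing and bit test are in NC. The polylog-output requirement is where the $\sqcap'$-tractability hypotheses really enter: since $\alpha,\beta,\rho_2$ are NC, their outputs are polynomial in $|D|$, so $|y|\in\text{poly}(|D|)$; then $|\sigma_1^2(y)|\leq|y|+c_2'$ gives $|\Pi(\sigma_1^2(y))|\in O(\text{polylog}(|\sigma_1^2(y)|))=O(\text{polylog}(|D|))$, while Proposition~\ref{proposition_pi2size} gives $|\sigma_2^2(y)|\in O(\text{polylog}(|y|))=O(\text{polylog}(|D|))$, and $t$ is one bit, so $|\Pi_1(D)|\in O(\text{polylog}(|D|))$.

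It remains to verify that $\langle D,Q\rangle\in S_{(L_1,\Upsilon_1')}$ iff $\langle\Pi_1(D),Q\rangle\in S_1'$. The case $Q\neq\epsilon$ and the case of malformed $D$ are disposed of by construction, so the content is the equivalence for $D=u@v$ and $Q=\epsilon$. Reading the left side through the definition of $\Upsilon_1'$ and then through the reduction turns it into $\langle a,b\rangle\in S_{(L_2,\Upsilon_2)}$, while reading the right side through the definitions of $\Pi_1$, $S_1'$ and the $\sqcap'$-tractability of $L_2$ turns it into ``$\langle\sigma_1^2(y),\sigma_2^2(y)\rangle\in S_{(L_2,\Upsilon_2')}$ and $t=1$''. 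The main obstacle --- the step the transitivity proof glossed over as ``rollback'' --- is that $\Upsilon_2$ and $\Upsilon_2'$ are two different factorizations of $L_2$, so I cannot freely pass between $S_{(L_2,\Upsilon_2)}$ and $S_{(L_2,\Upsilon_2')}$ via $y\in L_2$ in both directions: the map $z\mapsto(\sigma_1^2(z),\sigma_2^2(z))$ need not be injective off $L_2$, so membership of $\langle\sigma_1^2(y),\sigma_2^2(y)\rangle$ in $S_{(L_2,\Upsilon_2')}$ does not by itself force $y\in L_2$. The consistency bit $t$ is exactly what repairs this. I would establish the characterization that, for any $p,q$, $\langle p,q\rangle\in S_{(L,\Upsilon)}$ iff $\rho(p,q)\in L$ together with $\pi_1(\rho(p,q))=p$ and $\pi_2(\rho(p,q))=q$. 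Applying it to $\Upsilon_2'$ shows that ``$\langle\sigma_1^2(y),\sigma_2^2(y)\rangle\in S_{(L_2,\Upsilon_2')}$ and $\rho_2'(\sigma_1^2(y),\sigma_2^2(y))=y$'' is equivalent to $y\in L_2$, and applying it to $\Upsilon_2$ shows that ``$y\in L_2$, $\pi_1^2(y)=a$, $\pi_2^2(y)=b$'' is equivalent to $\langle a,b\rangle\in S_{(L_2,\Upsilon_2)}$. Chaining these two equivalences closes the iff in both directions, and since $\alpha,\beta$ and the auxiliary functions are all NC, $\Pi_1$ and $S_1'$ witness that $S_{(L_1,\Upsilon_1')}$ is $\sqcap'$-tractable, i.e.\ that $L_1$ can be made $\sqcap'$-tractable.
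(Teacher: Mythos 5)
Your proof is correct, and its skeleton coincides with the paper's own: collapse $\Upsilon_1$ into a factorization $\Upsilon_1'$ whose query part is $\epsilon$ by joining $\pi_1^1(x)$ and $\pi_2^1(x)$ with a fresh symbol $@$, pull the preprocessing for $L_2$ back through $h(x)=\rho_2(\alpha(f(x)),\beta(g(x)))$, and accept exactly those pairs whose data part decodes into the NC language witnessing the $\sqcap'$-tractability of $S_{(L_2,\Upsilon_2')}$. The genuine difference is in the converse direction. The paper sets $\Pi'(x)=\Pi(\sigma_1^2(h(x)))@\sigma_2^2(h(x))$ with no consistency information and disposes of the converse with the words ``by stepping back the previous derivation''; that rollback is exactly the step you flag, and it is not automatic. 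Since the identity $\rho_2'(\sigma_1^2(x),\sigma_2^2(x))=x$ is required only for $x\in L_2$, the map $x\mapsto\langle\sigma_1^2(x),\sigma_2^2(x)\rangle$ may send some $y\notin L_2$ to the same pair as a member of $L_2$, so $\langle\sigma_1^2(y),\sigma_2^2(y)\rangle\in S_{(L_2,\Upsilon_2')}$ does not force $y\in L_2$; similarly, $\rho_2(\pi_1^2(x),\pi_2^2(x))=x$ on $L_2$ says nothing about $\pi_1^2(\rho_2(a,b))=a$, so $y\in L_2$ does not by itself force $\langle a,b\rangle\in S_{(L_2,\Upsilon_2)}$. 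Both failure modes are consistent with every hypothesis of the theorem, so the paper's unmodified construction is not provably correct as stated; your consistency bit $t$, combined with your characterization of membership in $S_{(L,\Upsilon)}$, is precisely what closes both halves of the equivalence. In short: same route, but your version repairs a real gap that the paper glosses over.

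One caveat, which you inherit from the paper rather than introduce: conditions $(2)$ and $(3)$ of a cr-factorization, and hence Proposition~\ref{proposition_pi2size}, are guaranteed only for strings in the language, so the appeal to $|\sigma_2^2(y)|\in O(\mathrm{polylog}(|y|))$ is justified only when $y\in L_2$; for $y\notin L_2$ that component of $\Pi_1(D)$ could a priori be polynomially long, and the same objection applies verbatim to the paper's bound on $|\Pi'(x)|$. The repair is easy inside your framework: $L_2\in\mathrm{P}$ (to decide $z\in L_2$, check $\rho_2'(\sigma_1^2(z),\sigma_2^2(z))=z$ and membership of $\langle\Pi(\sigma_1^2(z)),\sigma_2^2(z)\rangle$ in the witnessing NC language), so $\Pi_1$ may first test whether $y\in L_2$ and emit the fixed rejecting token otherwise, after which every emitted output satisfies the polylog bound.
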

\begin{proof}
Let $L_{1}\leq_{fcr}^{NC}L_2$. Then there exist cr-factorizations
$\Upsilon_1=(\pi_1^1,\pi_2^1,\rho_1,c_1)$ of $L_1$ and $\Upsilon_2=(\pi_1^2,\pi_2^2,\rho_2,c_2)$ of $L_2$
and two NC functions $\alpha(\cdot)$ and $\beta(\cdot)$ such that for all $D,Q\in\Sigma^*$,
\begin{equation}
\label{equation_compatible1}
\langle{}D,Q\rangle\in{}S_{(L_1,\Upsilon_1)}\text{ iff }
\langle{}\alpha(D),\beta(Q)\rangle\in{}S_{(L_2,\Upsilon_2)}.
\end{equation}
Assuming that $L_{2}$ can be made $\sqcap'$-tractable, we have a cr-factorization
$\Upsilon'_2=(\sigma_1^2,\sigma_2^2,\rho'_2,c'_2)$ of $L_2$ such that $S_{(L_2,\Upsilon'_2)}$ is $\sqcap'$-tractable.
Whence, there are a PTIME function $\Pi(\cdot)$ and a language $S$ of pairs which is in NC, such that for all $D,Q\in\Sigma^*$,
\begin{equation}
\label{equation_compatible2}
\langle{}D,Q\rangle\in{}S_{(L_2,\Upsilon'_2)}
\text{ iff }\langle{}\Pi(D),Q\rangle\in{}S.
\end{equation}

We now construct a cr-factorization $\Upsilon'_1=(\sigma_1^1,\sigma_2^1,\rho'_1,c'_1)$ of $L_1$ by letting
\begin{eqnarray*}
  &&\sigma_1^1(x)=\pi_1^1(x)@\pi_2^1(x), \\
  &&\sigma_2^1(x)=\epsilon, \\
  &&\rho'_1(x_1@x_2,\epsilon)=\rho_1(x_1,x_2), \\
  &&c'_1=c_1+1,
\end{eqnarray*}
where $x, x_1,x_2\in\Sigma^*$ and the symbol $@$ is a new symbol that has never been used before. Let $S'$ be the set of pairs
$\langle{}D,Q\rangle$ satisfying that $Q=\epsilon$ and $\langle{}f(D),g(D)\rangle\in{}S$, where the functions $f$ and $g$ are two auxiliary functions defined by $f(x@y)=x$ and $g(x@y)=y$ to extract the front and back parts of a string containing the symbol $@$.

It is easy to verify that $\Upsilon'_1$ satisfies the definition of cr-factorization and $S'$ is in NC. Moreover, for any $\langle{}D,Q\rangle\in{}S_{(L_1,\Upsilon'_1)}$, we see by the definition of $\Upsilon'_1$ that
\begin{equation*}
Q=\epsilon\text{ and }\langle{}f(D),g(D)\rangle\in{}S_{(L_1,\Upsilon_1)}.
\end{equation*}
The latter gives that
\begin{equation*}
\langle{}\alpha(f(D)),\beta(g(D))\rangle\in{}S_{(L_2,\Upsilon_2)}
\end{equation*}
by Equation~(\ref{equation_compatible1}). According to the definition of $\Upsilon_2$, we have that
\begin{equation*}
\rho_2(\alpha(f(D)),\beta(g(D)))\in{}L_2.
\end{equation*}
Letting $h(x)=\rho_2(\alpha(f(x)),\beta(g(x)))$, we see that $h(D)\in L_2$, and thus
\begin{equation*}
\langle\sigma_1^2(h(D)),\sigma_2^2(h(D))\rangle\in{}S_{(L_2,\Upsilon'_2)}
\end{equation*}
by the definition of $\Upsilon'_2$. By Equation~(\ref{equation_compatible2}), we get that
\begin{equation*}
\langle\Pi(\sigma_1^2(h(D))),\sigma_2^2(h(D))\rangle\in{}S,
\end{equation*}
which implies that
\begin{equation*}
\langle\Pi(\sigma_1^2(h(D)))@\sigma_2^2(h(D)),\epsilon\rangle\in{}S'
\end{equation*}
according to the definition of $S'$. Defining $\Pi'(x)=\Pi(\sigma_1^2(h(x)))@\sigma_2^2(h(x))$, we obtain that  $\langle\Pi'(D),Q\rangle\in{}S'$.
Conversely, if $\langle\Pi'(D),Q\rangle\in{}S'$, we can check that $\langle{}D,Q\rangle\in{}S_{(L_1,\Upsilon'_1)}$ by stepping back the previous derivation.

In addition, it is easy to see that $\Pi'(\cdot)$ is a PTIME function.
It remains to show that $\Pi'$ gets a relatively short output, i.e., $|\Pi'(x)|\in{}O(\text{polylog}(|x|))$.
In fact, it is easy to get that $|h(x)|=|\rho_2(\alpha(f(x)),\beta(g(x)))|\in{}O(\text{polylog}(|x|))$, since $\rho_2,\alpha,\beta$ are all NC functions.
On the other hand, because $\sigma'_1$ is an NC function, $|\Pi(x)|\in{}O(\text{polylog}(|x|))$, and $|\sigma^2_2(x)|\in{}O(\text{polylog}(|x|))$ by Proposition~\ref{proposition_pi2size}, we know that $|\Pi'(x)|=|\Pi(\sigma_1^2(h(x)))@\sigma_2^2(h(x))|\in{}O(\text{polylog}(|x|))$.
As a result, we see that $L_1$ can be made $\sqcap'$-tractable.
This completes the proof of the theorem.
\end{proof}

As an immediate result of Theorems \ref{theorem_transitive} and \ref{theorem_compatible}, we have the following observation.

\begin{corollary}\ 
\begin{enumerate}
\item[$(1)$] The reduction $\leq_{fcr}^{NC}$ for classes of Boolean queries is transitive, i.e., if $\mathcal{Q}_{1}\leq_{fcr}^{NC}\mathcal{Q}_2$ and $\mathcal{Q}_{2}\leq_{fcr}^{NC}\mathcal{Q}_3$, then $\mathcal{Q}_{1}\leq_{fcr}^{NC}\mathcal{Q}_3$.
\item[$(2)$] The reduction $\leq_{fcr}^{NC}$ for classes of Boolean queries is compatible in $\sqcap{}'T_Q$, i.e., if $\mathcal{Q}_{1}\leq_{fcr}^{NC}\mathcal{Q}_2$ and $\mathcal{Q}_2$ can be made $\sqcap'$-tractable, then $\mathcal{Q}_1$ can be made $\sqcap'$-tractable.
\end{enumerate}
\end{corollary}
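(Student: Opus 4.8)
The plan is to reduce both parts to the already-established Theorems~\ref{theorem_transitive} and~\ref{theorem_compatible} by unwinding the definitions that lift $\leq_{fcr}^{NC}$ and $\sqcap'$-tractability from decision problems to classes of Boolean queries. The whole point is that part~(2) of Definition~\ref{definition_ncfcrreduction} declares $\mathcal{Q}_1\leq_{fcr}^{NC}\mathcal{Q}_2$ to hold exactly when $L_{\mathcal{Q}_1}\leq_{fcr}^{NC}L_{\mathcal{Q}_2}$, and the definition of ``can be made $\sqcap'$-tractable'' for a class $\mathcal{Q}$ is literally that its associated decision problem $L_{\mathcal{Q}}$ can be made $\sqcap'$-tractable. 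Thus each statement about classes is, by fiat, the corresponding statement about their decision problems, and the substantive work is entirely inherited from the two theorems.

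For part~(1), I would first translate the two hypotheses $\mathcal{Q}_1\leq_{fcr}^{NC}\mathcal{Q}_2$ and $\mathcal{Q}_2\leq_{fcr}^{NC}\mathcal{Q}_3$ into $L_{\mathcal{Q}_1}\leq_{fcr}^{NC}L_{\mathcal{Q}_2}$ and $L_{\mathcal{Q}_2}\leq_{fcr}^{NC}L_{\mathcal{Q}_3}$ using Definition~\ref{definition_ncfcrreduction}(2). Theorem~\ref{theorem_transitive} then yields $L_{\mathcal{Q}_1}\leq_{fcr}^{NC}L_{\mathcal{Q}_3}$ directly, and a final application of Definition~\ref{definition_ncfcrreduction}(2) in the reverse direction gives $\mathcal{Q}_1\leq_{fcr}^{NC}\mathcal{Q}_3$, as required. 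Part~(2) is parallel: the hypothesis $\mathcal{Q}_1\leq_{fcr}^{NC}\mathcal{Q}_2$ becomes $L_{\mathcal{Q}_1}\leq_{fcr}^{NC}L_{\mathcal{Q}_2}$, and the hypothesis that $\mathcal{Q}_2$ can be made $\sqcap'$-tractable becomes the statement that $L_{\mathcal{Q}_2}$ can be made $\sqcap'$-tractable; Theorem~\ref{theorem_compatible} then delivers that $L_{\mathcal{Q}_1}$ can be made $\sqcap'$-tractable, which is precisely the assertion that $\mathcal{Q}_1$ can be made $\sqcap'$-tractable.

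I expect there to be no genuine obstacle here: both parts are formal consequences of the definitions together with the two theorems, and the real content—the explicit construction of the composite cr-factorizations $\Upsilon'_1$ and $\Upsilon'_3$ (respectively the transfer of the preprocessing function $\Pi$) and the verification of the polylog-size output bound—was already discharged in the proofs of Theorems~\ref{theorem_transitive} and~\ref{theorem_compatible}. The only point worth a moment's care is to confirm that each $L_{\mathcal{Q}_i}$ is a well-defined decision problem to which those theorems genuinely apply, i.e.\ that the classes under discussion satisfy the short-query property so that the invoked cr-factorizations exist and the size restrictions are meaningful; this is guaranteed by the standing convention that all languages of pairs obey Equation~(\ref{equation_shortquery}). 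Since nothing beyond these bookkeeping steps is needed, the corollary is indeed an \emph{immediate} consequence of the two theorems.
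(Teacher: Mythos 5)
Your proof is correct and matches the paper's approach: the paper presents this corollary as an immediate consequence of Theorems~\ref{theorem_transitive} and~\ref{theorem_compatible}, precisely because Definition~\ref{definition_ncfcrreduction}(2) and the definition of making a class of Boolean queries $\sqcap'$-tractable reduce both statements, by fiat, to the corresponding statements about the decision problems $L_{\mathcal{Q}_i}$. Your unwinding of the definitions and application of the two theorems is exactly the intended argument.
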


The above corollary tells us how to indirectly show that a class $\mathcal{Q}$ of Boolean queries can be made $\sqcap'$-tractable: try to find another class $\mathcal{Q}'$ of Boolean queries that can be made $\sqcap'$-tractable and then prove that $\mathcal{Q}\leq_{fcr}^{NC}\mathcal{Q}'$. It also tells us how to show that a class $\mathcal{Q}$ of Boolean queries cannot be made $\sqcap'$-tractable: find a class $\mathcal{Q}'$ of Boolean queries that cannot be made $\sqcap'$-tractable and then prove
$\mathcal{Q}'\leq_{fcr}^{NC}\mathcal{Q}$.

\section{Complete Problems for $\sqcap{}'T_P$ and $\sqcap{}'T_Q$}
\label{section_sizecomplexityclass}

In the $\sqcap$-tractability theory, it has been shown that the decision problem BDS defined in Example~\ref{example_bds} is $\sqcap{}T_P$-complete and the class $\mathcal{Q}_{BDS}$ of Boolean queries defined in Example~\ref{example_bdsquery} is $\sqcap{}T_Q$-complete \cite{fan2013making}.
In this section, we show that under our NC-fcr reducibility, BDS is also $\sqcap{}'T_P$-complete and $\mathcal{Q}_{BDS}$ is also $\sqcap{}'T_Q$-complete, although we have made some requirements on preprocessing functions and factorizations.

In order to investigate the completeness of $\sqcap{}'T_P$ and $\sqcap{}'T_Q$, we need the following concepts.
\begin{definition}\ 
\begin{enumerate}
\item[$(1)$]
A problem $L$ is \textit{$\sqcap{}'T_P$-hard} under NC-fcr reducibility if $L'\leq_{fcr}^{NC}L$ for any decision problem $L'$ in $\sqcap{}'T_P$.
A problem $L$ is \textit{$\sqcap{}'T_P$-complete} under NC-fcr reducibility
if $L$ is $\sqcap{}'T_P$-hard and it can be made $\sqcap'$-tractable itself.
\item[$(2)$]
A class $\mathcal{Q}$ of Boolean queries is \textit{$\sqcap{}'T_Q$-hard} under NC-fcr reducibility if $\mathcal{Q}'\leq_{fcr}^{NC}\mathcal{Q}$ for any class $\mathcal{Q}'$ of Boolean queries in $\sqcap{}'T_Q$.
A class $\mathcal{Q}$ of Boolean queries is \textit{$\sqcap{}'T_Q$-complete} under NC-fcr reducibility if $\mathcal{Q}$ is $\sqcap{}'T_Q$-hard and it can be made $\sqcap'$-tractable itself.
\end{enumerate}
\end{definition}

The above definition is similar to NP-complete and captures the difficulty intrinsic of $\sqcap{}'T_P$ and $\sqcap{}'T_Q$.
Intuitively, a problem $L$ is $\sqcap{}'T_P$-complete if $L$ is no easier than any other decision problem in $\sqcap{}'T_P$
and a class $\mathcal{Q}$ of Boolean queries is $\sqcap{}'T_Q$-complete if $\mathcal{Q}$ is no easier than any other class of Boolean queries in $\sqcap{}'T_Q$.

As desired, we have the following complete problems.

\begin{theorem}\ 
\label{theorem_complete}
\begin{enumerate}
\item[$(1)$] Under NC-fcr reducibility, BDS is $\sqcap{}'T_P$-complete.
\item[$(2)$] Under NC-fcr reducibility, $\mathcal{Q}_{BDS}$ is $\sqcap{}'T_Q$-complete.
\end{enumerate}
\end{theorem}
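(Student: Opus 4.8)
The plan is to prove each part by separating completeness into membership and hardness. Membership comes essentially for free: Example~\ref{example_bds} already places $\text{BDS}$ in $\sqcap'T_P$ and Example~\ref{example_bdsquery} places $\mathcal{Q}_{BDS}$ in $\sqcap'T_Q$, so only the hardness direction requires work.

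For the hardness in part (1), I would first record the auxiliary inclusion $\sqcap'T_P\subseteq\text{P}$. Given $L\in\sqcap'T_P$ witnessed by a cr-factorization $\Upsilon=(\pi_1,\pi_2,\rho,c)$ together with a PTIME $\Pi$ and an NC language $S'$, membership $x\in L$ can be tested in polynomial time by checking both $\rho(\pi_1(x),\pi_2(x))=x$ and $\langle\Pi(\pi_1(x)),\pi_2(x)\rangle\in S'$; the consistency test $\rho(\pi_1(x),\pi_2(x))=x$ is what rules out strings $x\notin L$ whose factor image accidentally lies in $S_{(L,\Upsilon)}$, and every ingredient is computable in P. Next, since $\text{BDS}$ is P-complete under NC (indeed logspace) reductions, there is an NC function $r$ with $x\in L$ iff $r(x)\in\text{BDS}$. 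To cast this as an NC-fcr reduction I would endow $L$ with the trivial factorization $\Upsilon_1=(\pi_1^1,\pi_2^1,\rho_1,0)$ given by $\pi_1^1(x)=x$, $\pi_2^1(x)=\epsilon$, $\rho_1(x,\epsilon)=x$, and endow $\text{BDS}$ with the factorization $\Upsilon_2$ of Example~\ref{example_bds}; then $S_{(L,\Upsilon_1)}=\{\langle x,\epsilon\rangle\mid x\in L\}$ and $S_{(\text{BDS},\Upsilon_2)}=\{\langle y,\epsilon\rangle\mid y\in\text{BDS}\}$, and both factorizations are legitimate cr-factorizations because their query component is empty, so conditions (2) and (3) hold with $c=0$. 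Taking $\alpha=r$ and $\beta$ the identity, one verifies for all $D,Q$ that $\langle D,Q\rangle\in S_{(L,\Upsilon_1)}$ iff $Q=\epsilon$ and $D\in L$ iff $\beta(Q)=\epsilon$ and $\alpha(D)\in\text{BDS}$ iff $\langle\alpha(D),\beta(Q)\rangle\in S_{(\text{BDS},\Upsilon_2)}$, which gives $L\leq_{fcr}^{NC}\text{BDS}$ and hence $\sqcap'T_P$-hardness.

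Part (2) would follow the same template at the query level. For $\mathcal{Q}\in\sqcap'T_Q$ we have $L_\mathcal{Q}\in\sqcap'T_P\subseteq\text{P}$, whereas $L_{\mathcal{Q}_{BDS}}$ is itself P-complete, being $\text{BDS}$ up to the trivial NC-computable re-encoding $(G,(u,v))\leftrightarrow G\#(u,v)$. Running the identical trivial-factorization construction yields $L_\mathcal{Q}\leq_{fcr}^{NC}L_{\mathcal{Q}_{BDS}}$, i.e.\ $\mathcal{Q}\leq_{fcr}^{NC}\mathcal{Q}_{BDS}$, and combined with the membership from Example~\ref{example_bdsquery} we obtain that $\mathcal{Q}_{BDS}$ is $\sqcap'T_Q$-complete.

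The hard part will not be the P-completeness reduction, which is classical, but confirming that the extra structural restrictions of our framework do not break it: one must check that the ``everything-is-data'' factorizations genuinely satisfy the constant-redundancy bound (2) and the polylog-query bound (3), and that the inclusion $\sqcap'T_P\subseteq\text{P}$ still goes through under the polylog-output restriction on $\Pi$, where the delicate point is the need for the $\rho(\pi_1(x),\pi_2(x))=x$ consistency check so that non-members are correctly rejected.
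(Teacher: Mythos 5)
Your proposal is correct and takes essentially the same route as the paper's proof: membership via Examples~\ref{example_bds} and~\ref{example_bdsquery}, and hardness via the inclusion $\sqcap{}'T_P\subseteq\text{P}$ combined with the P-completeness of BDS under NC reductions, realized through trivial ``everything-is-data'' cr-factorizations with $\beta$ the identity. The differences are cosmetic: the paper builds its BDS-side factorization by concatenating the two parts of an abstract witness factorization with a fresh separator symbol (and declares part $(2)$ to follow immediately from part $(1)$), whereas you reuse the concrete trivial factorization of Example~\ref{example_bds} directly, spell out part $(2)$ via the P-completeness of $L_{\mathcal{Q}_\text{BDS}}$, and make explicit the $\rho$-consistency check needed to justify $\sqcap{}'T_P\subseteq\text{P}$ --- details the paper leaves implicit.
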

\begin{proof} Observing that (2) follows immediately from (1), we only need to prove that BDS is $\sqcap{}'T_P$-complete.
Because we have already shown that $\text{BDS}\in\sqcap{}'T_P$ in Example~\ref{example_bds}, it is sufficient to prove that BDS is $\sqcap{}'T_P$-hard.

Thanks to $\text{BDS}\in\sqcap{}'T_P$, there exists a cr-factorization $\Upsilon=(\pi_1,\pi_2,\rho,c)$ of BDS such that $S_{(\text{BDS},\Upsilon)}$ is $\sqcap'$-tractable.
For any decision problem $L\in\sqcap{}'T_P$, we know that $L\in\text{P}$ by combining the PTIME preprocessing step and the NC query evaluation step.
Considering that BDS is a P-complete problem \cite{greenlaw1993breadth}, there exists an NC function $h(\cdot)$ such that for all $x\in\Sigma^*$,
\begin{equation}
\label{equation_sizecomplex}
x\in{}L\text{ iff }h(x)\in{}\text{BDS.}
\end{equation}
We now construct two cr-factorizations $\Upsilon_L=(\sigma_1,\sigma_2,\rho_L,c_L)$ of $L$ and $\Upsilon'=(\pi'_1,\pi'_2,\rho',c')$ of BDS by defining
\begin{eqnarray*}
   &&\sigma_1(x)=x, \\
   &&\sigma_2(x)=\epsilon, \\
   &&\rho_L(x,\epsilon)=x, \\
   &&c_L=0, \\
   &&\pi'_1(x)=\pi_1(x)@\pi_2(x), \\
   &&\pi'_2(x)=\epsilon,\\
   &&\rho'(x_1@x_2,\epsilon)=\rho(x_1,x_2), \\
   &&c'=1,
\end{eqnarray*}
for all $x,x_1,x_2\in\Sigma^*$, where the symbol $@$ is a new symbol that has never been used anywhere.

For any $\langle{}D,Q\rangle\in{}S_{(L,\Upsilon_L)}$, it follows from the definition of $\Upsilon_L$ that
$Q=\epsilon$ and $D\in{}L$.
By Equation~(\ref{equation_sizecomplex}), the latter yields that $h(D)\in{}$BDS, which gives rise to
\begin{equation*}
\langle\pi'_1(h(D)),\pi'_2(h(D))\rangle\in{}S_{(\text{BDS},\Upsilon')}
\end{equation*}
by the definition of $S_{(\text{BDS},\Upsilon')}$. We thus get by the definitions of $\pi'_1$ and $\pi'_2$ that
\begin{equation*}
\langle\pi_1(h(D))@\pi_2(h(D)),\epsilon\rangle\in{}S_{(\text{BDS},\Upsilon')}.
\end{equation*}
Letting $\alpha(x)=\pi_1(h(x))@\pi_2(h(x))$ and $\beta(x)=x$, we obtain that $\langle\alpha(D),\beta(Q)\rangle\in{}S_{(\text{BDS},\Upsilon')}$.
It is easy to see that $\alpha(\cdot)$ and $\beta(\cdot)$ are NC functions, and the above  derivation can be easily checked back. Therefore, we have that $L\leq_{fcr}^{NC}\text{BDS}$, and thus BDS is $\sqcap{}'T_P$-hard, finishing the proof of the theorem.
\end{proof}

From the proof of Theorem \ref{theorem_complete}, it is easy to make the following observation.
\begin{corollary}\ 
\label{corollary_sizecomplex}
\begin{enumerate}
\item[$(1)$] $\sqcap{}'T_P=P$.
\item[$(2)$] $\sqcap{}'T_Q=\{\mathcal{Q}|\mathcal{Q}\text{ is a PTIME class of Boolean queries}\}$.
\end{enumerate}
\end{corollary}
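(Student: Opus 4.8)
The plan is to derive Corollary~\ref{corollary_sizecomplex} directly from Theorem~\ref{theorem_complete} together with the compatibility and transitivity of $\leq_{fcr}^{NC}$ (Theorem~\ref{theorem_compatible} and its corollary). I would prove the two set-equalities by mutual inclusion. For part (1), the inclusion $\sqcap'T_P\subseteq P$ is the easy direction: any $L\in\sqcap'T_P$ admits a cr-factorization $\Upsilon$ with $S_{(L,\Upsilon)}$ being $\sqcap'$-tractable, so there is a PTIME function $\Pi$ and an NC language $S'$ deciding membership; composing the PTIME preprocessing with the NC (hence PTIME) evaluation, and recalling that $\pi_1,\pi_2,\rho$ are NC-computable so that $x$ can be reconstructed in PTIME, shows $L\in P$. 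This is exactly the ``combining the PTIME preprocessing step and the NC query evaluation step'' observation already used inside the proof of Theorem~\ref{theorem_complete}.

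The reverse inclusion $P\subseteq\sqcap'T_P$ is where the real content sits, and the plan is to read it off the hardness argument. I would take any $L\in P$ and note that the proof of Theorem~\ref{theorem_complete} never used $L\in\sqcap'T_P$ beyond deducing $L\in P$; what it actually establishes is that \emph{every} $L\in P$ satisfies $L\leq_{fcr}^{NC}\text{BDS}$, via P-completeness of BDS and the explicit cr-factorizations $\Upsilon_L$ and $\Upsilon'$ constructed there. Since $\text{BDS}\in\sqcap'T_P$ (Example~\ref{example_bds}) and $\leq_{fcr}^{NC}$ is compatible with $\sqcap'T_P$ by Theorem~\ref{theorem_compatible}, I would conclude $L\in\sqcap'T_P$. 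Combining the two inclusions gives $\sqcap'T_P=P$.

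For part (2), I would mirror the argument at the level of classes of Boolean queries using the corollary to Theorem~\ref{theorem_compatible}. If $\mathcal{Q}\in\sqcap'T_Q$ then $L_{\mathcal{Q}}\in\sqcap'T_P\subseteq P$ by part (1), so $\mathcal{Q}$ is a PTIME class. Conversely, if $\mathcal{Q}$ is a PTIME class of Boolean queries then $L_{\mathcal{Q}}\in P$, hence $L_{\mathcal{Q}}\leq_{fcr}^{NC}\text{BDS}$ as above, i.e. $\mathcal{Q}\leq_{fcr}^{NC}\mathcal{Q}_{BDS}$; since $\mathcal{Q}_{BDS}\in\sqcap'T_Q$ (Example~\ref{example_bdsquery}) and $\leq_{fcr}^{NC}$ is compatible in $\sqcap'T_Q$, we obtain $\mathcal{Q}\in\sqcap'T_Q$.

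The step I expect to be the main obstacle is confirming the hidden hypothesis that the short-query property and the polylog-size constraints do not secretly exclude some PTIME problem. Concretely, I would have to check that for an arbitrary $L\in P$ the trivial cr-factorization $\Upsilon_L$ with $\sigma_2(x)=\epsilon$ genuinely qualifies as a cr-factorization---conditions (2) and (3) in the definition hold trivially since $|\pi_2(x)|=0$ and $c_L=0$---and, more delicately, that the reduct $\Upsilon'$ of BDS is legitimate even though $h(x)$ may blow up the instance size: here $\pi_1',\pi_2',\rho'$ remain NC and the redundancy stays constant ($c'=1$) because the single separator $@$ is the only added symbol. The only genuinely substantive point is that the image $S_{(\text{BDS},\Upsilon)}$ is $\sqcap'$-tractable with a \emph{polylog-size} preprocessing output, which is precisely what Example~\ref{example_bds} secured by collapsing the whole P-complete computation into a single bit; so the potentially troublesome size restrictions are absorbed at the target end rather than along the reduction, and the corollary follows cleanly.
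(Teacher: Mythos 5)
Your proposal is correct and follows essentially the same route as the paper: the paper derives this corollary directly ``from the proof of Theorem~\ref{theorem_complete},'' i.e., it observes exactly as you do that the hardness argument there uses only $L\in\text{P}$ (via P-completeness of BDS and the trivial cr-factorization $\sigma_1(x)=x$, $\sigma_2(x)=\epsilon$), so every PTIME problem is NC-fcr reducible to BDS and compatibility (Theorem~\ref{theorem_compatible}) yields membership in $\sqcap{}'T_P$, while the reverse inclusion is the ``PTIME preprocessing plus NC evaluation'' observation. Your extra checks (the legitimacy of the trivial cr-factorization, the constant redundancy $c'=c+1$, and the reconstruction of $x$ via $\rho$ in the easy direction) only make explicit details the paper leaves implicit.
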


Recall that in the $\sqcap$-tractability theory, it is also concluded that $\sqcap{}T_P=\text{P}$ and that all classes of Boolean queries that are in PTIME can be made $\sqcap$-tractable \cite{fan2013making}. As a result, we have the following remark.

\begin{remark}\ 
\begin{enumerate}
\item[$(1)$] $\sqcap{}'T_P=\sqcap{}T_P$.
\item[$(2)$] $\sqcap{}'T_Q=\sqcap{}T_Q$.
\end{enumerate}
\end{remark}

\section{$\sqcap'$-Tractable Queries without re-factorization}
\label{section_nofactorization}
As stated at the beginning of Section~\ref{section_reducibility}, this section is devoted to exploring the properties of $\sqcap'T_Q^0$.

Let us recall the NC-fcr reduction defined in Section~\ref{section_reducibility}.
Suppose that there are two classes $\mathcal{Q}_1$ and $\mathcal{Q}_2$ of Boolean queries, represented by languages $S_1$ and $S_2$ of pairs, respectively.
We say that $\mathcal{Q}_1$ is NC-fcr reducible to $\mathcal{Q}_2$ if there exist a cr-factorization $\Upsilon_1$ of the decision problem for $\mathcal{Q}_1$ and a cr-factorization $\Upsilon_2$ of the decision problem for $\mathcal{Q}_2$, such that we can transform
$S_{(\mathcal{Q}_1,\Upsilon_1)}$ to $S_{(\mathcal{Q}_2,\Upsilon_2)}$ using two NC functions
for data part and query part, respectively.
However, $S_{(\mathcal{Q}_1,\Upsilon_1)}$ may be different from $S_1$ and $S_{(\mathcal{Q}_2,\Upsilon_2)}$ may be different from $S_2$.
It is a natural idea to restrict transforms from the data of $S_1$ to that of $S_2$ and from the query of $S_1$ to that of $S_2$ without any re-factorization.
Based on this idea, Fan et al. defined F-reduction and studied the complete problem and size of $\sqcap{}T_Q^0$ \cite{fan2013making}.
They concluded that unless $\text{P}=\text{NC}$, $\sqcap{}T_Q^0$ is properly contained in P and a complete query class for $\sqcap{}T_Q^0$ is a witness in $\text{P}\backslash\text{NC}$.

With a little surprise, we find that our $\sqcap'T_Q^0$ is properly contained in P, regardless of whether $\text{P}=\text{NC}$.
Further, we will prove that the complete query of $\sqcap'T_Q^0$ is also a witness of $\text{P}\backslash\text{NC}$, and moreover, we get that $\sqcap'T_Q^0$ is properly contained in $\sqcap{}T_Q^0$ by showing that the query $\mathcal{Q}_\text{BDS}$ defined in Example~\ref{example_bdsquery} is not in $\sqcap'T_Q^0$.

We discuss the complete query class of $\sqcap'T_Q^0$ in Section~\ref{section_completeof0} and explain why $\mathcal{Q}_\text{BDS}$ defined in Example~\ref{example_bdsquery} is not in our $\sqcap'T_Q^0$ and show that $\sqcap'T_Q^0\subsetneq\sqcap{}T_Q^0$ in Section~\ref{section_difference}.

\subsection{Complete Queries for $\sqcap'T_Q^0$}
\label{section_completeof0}
For later need, we consider a reduction in $\sqcap'T_Q^0$, we would like to follow the concept of reduction $\leq_F^{NC}$ defined in \cite{fan2013making}.
Recall that as stated in Definition~\ref{definition_ncfreduction}, a language $S_1$ is said to be \textit{NC-F reducible} to another language $S_2$ of pairs, denoted by $S_1\leq^{NC}_FS_2$,
if there exist two NC functions $\alpha(\cdot),\beta(\cdot)$, such that for all $D,Q\in\Sigma^*$: $\langle{}D,Q\rangle\in{}S_1$
iff $\langle{}\alpha(D),\beta(Q)\rangle\in{}S_2$.
A class $\mathcal{Q}_1$ of Boolean queries is said to be \textit{NC-F reducible} to another class $\mathcal{Q}_2$ of Boolean queries,
denoted by $\mathcal{Q}_1\leq^{NC}_F\mathcal{Q}_2$, if $S_1\leq^{NC}_FS_2$, where $S_1$ and $S_2$
are the language pairs of $\mathcal{Q}_1$ and $\mathcal{Q}_2$, respectively.

Compared to Definition~\ref{definition_ncfcrreduction}, the reduction $\leq^{NC}_F$ is only defined on classes of Boolean queries, since decision problems do not have explicit data and query parts.
Moreover, this reduction does not allow re-factorization, so query classes are transformed strictly from data to data and from query to query.

As we will see in the following proposition, the reduction $\leq^{NC}_F$ is transitive and compatible in $\sqcap'T_Q^0$.
Note that, in contrast to $\leq^{NC}_{fcr}$, which is compatible in the set $\sqcap'T_Q$ of all queries that can be made $\sqcap'$-tractable, $\leq^{NC}_{F}$ is compatible in the set $\sqcap'T_Q^0$ of all queries that are $\sqcap'$-tractable.

\begin{proposition}
For any classes $\mathcal{Q}_1$, $\mathcal{Q}_2$, and $\mathcal{Q}_3$ of Boolean queries,
\begin{enumerate}
  \item[$(1)$] if $\mathcal{Q}_1\leq^{NC}_F\mathcal{Q}_2$ and $\mathcal{Q}_2\leq^{NC}_F\mathcal{Q}_3$,
            then $\mathcal{Q}_1\leq^{NC}_F\mathcal{Q}_3$.
  \item[$(2)$] If $\mathcal{Q}_1\leq^{NC}_F\mathcal{Q}_2$ and $\mathcal{Q}_2$ is in $\sqcap'T_Q^0$, then $\mathcal{Q}_1$ is also
            in $\sqcap'T_Q^0$. That is, a class of Boolean queries that is NC-F reducible to another $\sqcap'$-tractable class of Boolean queries is also $\sqcap'$-tractable.
\end{enumerate}
\end{proposition}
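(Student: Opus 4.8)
The plan is to prove the two claims---transitivity and compatibility of $\leq^{NC}_F$ in $\sqcap'T_Q^0$---directly from the definitions, mirroring the standard reduction arguments but keeping careful track of the polylog-size output condition that distinguishes $\sqcap'$-tractability. For part (1), I would start by unpacking the hypotheses: $\mathcal{Q}_1\leq^{NC}_F\mathcal{Q}_2$ gives NC functions $\alpha_1,\beta_1$ with $\langle D,Q\rangle\in S_1$ iff $\langle\alpha_1(D),\beta_1(Q)\rangle\in S_2$, and $\mathcal{Q}_2\leq^{NC}_F\mathcal{Q}_3$ gives NC functions $\alpha_2,\beta_2$ with $\langle D,Q\rangle\in S_2$ iff $\langle\alpha_2(D),\beta_2(Q)\rangle\in S_3$. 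The natural composite is to set $\alpha=\alpha_2\circ\alpha_1$ and $\beta=\beta_2\circ\beta_1$. Chaining the two biconditionals yields $\langle D,Q\rangle\in S_1$ iff $\langle\alpha(D),\beta(Q)\rangle\in S_3$. The only technical point is closure of NC under composition, which is standard since composing two $\log^{O(1)}$-time PRAM computations stays in NC. This part is routine.

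For part (2), suppose $\mathcal{Q}_1\leq^{NC}_F\mathcal{Q}_2$ via NC functions $\alpha,\beta$, and $\mathcal{Q}_2$ is $\sqcap'$-tractable, so $S_2$ admits a PTIME preprocessing $\Pi$ with $|\Pi(x)|\in O(\text{polylog}(|x|))$ and an NC language $S'$ of pairs with $\langle D,Q\rangle\in S_2$ iff $\langle\Pi(D),Q\rangle\in S'$. I would construct for $S_1$ the preprocessing function $\Pi_1(D)=\Pi(\alpha(D))$ and the target language $S'_1=\{\langle D',Q\rangle\mid \langle D',\beta(Q)\rangle\in S'\}$. Chaining the equivalences, $\langle D,Q\rangle\in S_1$ iff $\langle\alpha(D),\beta(Q)\rangle\in S_2$ iff $\langle\Pi(\alpha(D)),\beta(Q)\rangle\in S'$ iff $\langle\Pi_1(D),Q\rangle\in S'_1$, which is exactly condition (1) of $\sqcap'$-tractability. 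I must then verify the remaining conditions: $\Pi_1$ is PTIME since $\alpha$ is NC (hence PTIME) and $\Pi$ is PTIME, and $S'_1$ is in NC because deciding membership amounts to applying the NC function $\beta$ to $Q$ and then testing membership in the NC language $S'$.

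The step I expect to require the most care is verifying the polylog-size output condition $|\Pi_1(x)|\in O(\text{polylog}(|x|))$, which is precisely what sets $\sqcap'$-tractability apart from ordinary $\sqcap$-tractability and is where the original $\leq^{NC}_F$ compatibility argument does not transfer verbatim. The issue is that $\Pi$ guarantees polylog output only relative to the size of \emph{its own} input $\alpha(D)$, and an NC function $\alpha$ may enlarge its argument polynomially, so a priori $|\Pi_1(D)|\in O(\text{polylog}(|\alpha(D)|))\subseteq O(\text{polylog}(|D|^{O(1)}))=O(\text{polylog}(|D|))$. Fortunately this inclusion does hold, because $\text{polylog}$ of a polynomial is again $\text{polylog}$; I would spell this out explicitly to close the argument. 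Finally I would note the short-query property is inherited along the reduction, since $\beta$ is NC and maps short queries to queries of length polynomial in a polylog quantity, keeping them polylog in the data size, so $S_1$ is a legitimate language of pairs in our framework. Concluding both parts establishes that $\leq^{NC}_F$ is transitive and compatible in $\sqcap'T_Q^0$.
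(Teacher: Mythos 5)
Your proposal is correct and follows essentially the same route as the paper's proof: part (1) by composing the NC functions $\alpha_2\circ\alpha_1$ and $\beta_2\circ\beta_1$, and part (2) by using $\Pi\circ\alpha$ as the new preprocessing function together with the auxiliary language $S'$ of pairs $\langle D,Q\rangle$ with $\langle D,\beta(Q)\rangle\in S$. The only difference is that you spell out explicitly why the polylog-output condition survives composition with $\alpha$ (polylog of a polynomial is polylog), a point the paper asserts without detail, so your write-up is if anything slightly more careful at the crux.
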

\begin{proof}
(1) Given $\mathcal{Q}_1\leq^{NC}_F\mathcal{Q}_2$ and $\mathcal{Q}_2\leq^{NC}_F\mathcal{Q}_3$, we have four NC functions $\alpha_1$, $\beta_1$, $\alpha_2$, and $\beta_2$, such that
\begin{equation*}
\begin{aligned}
  \langle{}D,Q\rangle\in{}S_1&\text{ iff }\langle\alpha_1(D),\beta_1(Q)\rangle\in{}S_2 \\
  \langle{}D,Q\rangle\in{}S_2&\text{ iff }\langle\alpha_2(D),\beta_2(Q)\rangle\in{}S_3
\end{aligned}
\end{equation*}
where $S_1$, $S_2$, and $S_3$ are the language of pairs for $\mathcal{Q}_1$,$\mathcal{Q}_2$, and $\mathcal{Q}_3$.

It is easy to see that
\begin{equation*}
  \langle{}D,Q\rangle\in{}S_1\text{ iff }\langle\alpha_2(\alpha_1(D)),\beta_2(\beta_1(Q))\rangle\in{}S_3.
\end{equation*}
Since $\alpha_2\comp\alpha_1$ and $\beta_2\comp\beta_1$ are also NC functions, it holds that $\mathcal{Q}_1\leq^{NC}_F\mathcal{Q}_3$.

(2) Given that $\mathcal{Q}_1\leq^{NC}_F\mathcal{Q}_2$, we have two NC functions $\alpha$ and $\beta$, such that
\begin{equation*}
  \langle{}D,Q\rangle\in{}S_1\text{ iff }\langle\alpha(D),\beta(Q)\rangle\in{}S_2
\end{equation*}
where $S_1$ and $S_2$ are the languages of pairs for $\mathcal{Q}_1$ and $\mathcal{Q}_2$.

Assuming that $\mathcal{Q}_2$ is $\sqcap'T_Q^0$, we have a PTIME function $\Pi$ and a language of pairs $S$ in NC, such that
\begin{equation*}
  \langle{}D,Q\rangle\in{}S_2\text{ iff }\langle\Pi(D),Q\rangle\in{}S.
\end{equation*}

Hence, we can verify that
\begin{equation*}
  \langle{}D,Q\rangle\in{}S_1\text{ iff }\langle\Pi(\alpha(D)),\beta(Q)\rangle\in{}S.
\end{equation*}
Let $S'$ be the set of pairs $\langle{}D,Q\rangle$ satisfying that $\langle{}D,\beta(Q)\rangle\in{}S$.
We can verify that $S'$ is also in NC and
\begin{equation*}
  \langle{}D,Q\rangle\in{}S_1\text{ iff }\langle\Pi(\alpha(D)),Q\rangle\in{}S'.
\end{equation*}

Because $\Pi$ is a PTIME function and has the short-output property, $\Pi\comp\alpha_2$ is also a PTIME function and has the short-output property.
Thus $S_1$ is $\sqcap'$-tractable and $\mathcal{Q}_1$ is in $\sqcap'T_Q^0$.
\end{proof}

After attaching the reduction $\leq_F^{NC}$ to $\sqcap'T_Q^0$, we study the complete queries for $\sqcap'T_Q^0$.
First, we define complete queries as follows.

\begin{definition}
  A class $\mathcal{Q}$ of Boolean queries is \textit{$\sqcap'T_Q^0$-hard} under NC-F reduction
  if $\mathcal{Q}'\leq_F^{NC}\mathcal{Q}$ for all $\mathcal{Q}'\in\sqcap'T_Q^0$.
  A class $\mathcal{Q}$ of Boolean queries is \textit{$\sqcap'T_Q^0$-complete} under NC-F reduction if $\mathcal{Q}$ is $\sqcap'T_Q^0$-hard and $\mathcal{Q}$ is $\sqcap'$-tractable itself.
\end{definition}

However, a complete query for $\sqcap'T_Q^0$ is also difficult to find and the existence of such a query
has a close relation with whether $\text{P}=\text{NC}$.
This property is the same as that of $\sqcap{}T_Q^0$.

\begin{theorem}
  A complete class of Boolean queries for $\sqcap'T_Q^0$ under NC-F reduction is a witness in $\text{P}\backslash\text{NC}$, unless $\text{P}=\text{NC}$.
\end{theorem}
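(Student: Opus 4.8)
The plan is to establish two facts about any $\sqcap'T_Q^0$-complete class $\mathcal{Q}$: first, that $\mathcal{Q}$ lies in P unconditionally, and second, that if $\mathcal{Q}$ additionally lies in NC then $\text{P}=\text{NC}$. Together these give the contrapositive form of the statement: assuming $\text{P}\neq\text{NC}$, $\mathcal{Q}$ is in P but not in NC, hence a witness separating the two classes.

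For the first fact, I would use only that a complete class is by definition $\sqcap'$-tractable. Thus there are a PTIME function $\Pi$ and a language $S'$ of pairs in NC with $\langle D,Q\rangle\in S_\mathcal{Q}$ iff $\langle\Pi(D),Q\rangle\in S'$. To decide membership of $\langle D,Q\rangle$ in $S_\mathcal{Q}$ one computes $\Pi(D)$ in polynomial time and then tests $\langle\Pi(D),Q\rangle\in S'$, which is in $\text{NC}\subseteq\text{P}$; the composite runs in polynomial time, so $S_\mathcal{Q}\in\text{P}$ and $\mathcal{Q}$ is in P.

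For the second fact, I would exploit the P-completeness of CVP together with the fact, shown in Example~\ref{example_cvp}, that $\mathcal{Q}_\text{CVP}\in\sqcap'T_Q^0$. Since $\mathcal{Q}$ is $\sqcap'T_Q^0$-hard, we have $\mathcal{Q}_\text{CVP}\leq_F^{NC}\mathcal{Q}$, witnessed by NC functions $\alpha,\beta$ with $\langle q,\epsilon\rangle\in S_\text{CVP}$ iff $\langle\alpha(q),\beta(\epsilon)\rangle\in S_\mathcal{Q}$. Now assume $\mathcal{Q}\in\text{NC}$, i.e., $S_\mathcal{Q}$ is decidable in NC. Then CVP can be decided in NC: given an instance $q$, compute $\alpha(q)$ in NC (and note $\beta(\epsilon)$ is a fixed string), and test $\langle\alpha(q),\beta(\epsilon)\rangle\in S_\mathcal{Q}$ in NC; the closure of NC under composition makes the whole procedure an NC algorithm. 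Because CVP is P-complete under logspace (hence NC) reductions, $\text{CVP}\in\text{NC}$ forces $\text{P}\subseteq\text{NC}$, and since $\text{NC}\subseteq\text{P}$ always holds, we conclude $\text{P}=\text{NC}$.

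Combining the two facts, if $\text{P}\neq\text{NC}$ then the complete class $\mathcal{Q}$ satisfies $\mathcal{Q}\in\text{P}$ and $\mathcal{Q}\notin\text{NC}$, so it is a witness in $\text{P}\backslash\text{NC}$. The only delicate point is the second step: one must verify that composing the NC reduction functions $\alpha,\beta$ with an NC decision procedure for $S_\mathcal{Q}$ stays within NC, and that the trivial query part $\beta(\epsilon)$ causes no difficulty (it is a constant and so needs no computation). Everything else is a routine combination of the definitions of $\sqcap'$-tractability, NC-F reducibility, and the P-completeness of CVP, and I expect no serious obstacle beyond this bookkeeping.
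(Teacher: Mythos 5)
Your proposal is correct and follows essentially the same route as the paper's own proof: both hinge on $\mathcal{Q}_\text{CVP}\in\sqcap'T_Q^0$ (Example~\ref{example_cvp}), use $\sqcap'T_Q^0$-hardness to obtain $\mathcal{Q}_\text{CVP}\leq_F^{NC}\mathcal{Q}$, and then invoke the P-completeness of CVP plus closure of NC under composition to show that $\mathcal{Q}\in\text{NC}$ would force $\text{P}=\text{NC}$. The only cosmetic difference is that you first conclude $\text{CVP}\in\text{NC}$ and then appeal to its P-completeness, whereas the paper threads an arbitrary $L\in\text{P}$ through the chain $L\to\text{CVP}\to\mathcal{Q}$ directly; you also spell out the membership $\mathcal{Q}\in\text{P}$, which the paper dismisses as clear.
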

\begin{proof}
  Suppose that $\text{P}\neq\text{NC}$ and there exists a class $\mathcal{Q}$ of Boolean queries, which is complete for $\sqcap'T_Q^0$
  under NC-F reduction.
  Clearly, $\mathcal{Q}$ is in P, so we only need to prove that $\mathcal{Q}\notin\text{NC}$.

  From Example~\ref{example_cvp}, we know that $\mathcal{Q}_\text{CVP}$ is in $\sqcap'T_Q^0$. It follows that $\mathcal{Q}_\text{CVP}\leq^{NC}_{F}\mathcal{Q}$, since $\mathcal{Q}$ is a complete query.
  That is, there exist two NC functions $\alpha(\cdot)$ and $\beta(\cdot)$, such that
  \begin{equation}\label{equation_ncnfreduction}
  \langle{}D,Q\rangle\in\mathcal{Q}_\text{CVP}\text{ iff }\langle\alpha(D),\beta(Q)\rangle\in\mathcal{Q}.
  \end{equation}
  It has been known that CVP is a P-complete problem \cite{greenlaw1995limits}, so for any decision problem $L\in\text{P}$,
  there exists an NC function $h$, such that $x\in{}L$ iff $h(x)\in\text{CVP}$.
  Note that $h(x)\in\text{CVP}$ iff $\langle{}h(x),\epsilon\rangle\in\mathcal{Q}_\text{CVP}$.
  Consequently it follows from Equation~(\ref{equation_ncnfreduction}) that $x\in{}L$ iff $\langle\alpha(h(x)),\beta(\epsilon)\rangle\in\mathcal{Q}$.
  So we can decide whether $x\in{}L$ by deciding whether $\langle\alpha(\phi(x)),\beta(\epsilon)\rangle\in\mathcal{Q}$.
  If $\mathcal{Q}\in\text{NC}$, then the latter can be decided in NC, and thus $L\in\text{NC}$. This forces that $\text{P}=\text{NC}$, a contradiction.
  Hence, $\mathcal{Q}$ is a witness in $\text{P}\backslash\text{NC}$.
\end{proof}

According to the above theorem, an approach to proving that $\text{P}\neq\text{NC}$ is to find a complete query for $\sqcap'T_Q^0$ and verify that it is not in NC.
Similarly, it is sufficient to prove that $\text{P}=\text{NC}$ by showing that a complete query for $\sqcap'T_Q^0$ is in NC.

\subsection{Difference between $\sqcap'T_Q^0$ and $\sqcap{}T_Q^0$}
\label{section_difference}
Until now, we have seen that $\sqcap'T_Q^0$ and $\sqcap{}T_Q^0$ share the same properties.
However, in this section, we show that $\sqcap'T_Q^0$ is properly contained in $\sqcap{}T_Q^0$ by verifying that a special query is not in our $\sqcap'T_Q^0$.
Our short-query restriction on the preprocessing function plays a role in the following theorem.

\begin{theorem}
  The class $\mathcal{Q}_\text{BDS}$ of Boolean queries is not $\sqcap'$-tractable, i.e., $\mathcal{Q}_\text{BDS}\notin\sqcap'T_Q^0$.
  Consequently, $\sqcap'T_Q^0\subsetneq\sqcap{}T_Q^0$ and $\sqcap'T_Q^0\subsetneq\text{P}$.
\end{theorem}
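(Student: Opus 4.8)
The plan is to argue by contradiction against the polylog-size restriction on the preprocessing function, using an information-theoretic counting argument. Suppose $\mathcal{Q}_\text{BDS}\in\sqcap'T_Q^0$, witnessed by a PTIME function $\Pi$ with $|\Pi(x)|\in O(\text{polylog}(|x|))$ and an NC language $S'$ of pairs such that $\langle G,(u,v)\rangle\in S_{\mathcal{Q}_\text{BDS}}$ iff $\langle\Pi(G),(u,v)\rangle\in S'$. The first observation I would record is that $\Pi(G)$ alone must determine the answer to \emph{every} query $(u,v)$, hence it implicitly encodes the entire visiting order of the breadth-depth search on $G$. Concretely, if two graphs $G_1$ and $G_2$ induce different BDS orderings, there is a pair $(u,v)$ ordered oppositely in the two searches, so $\langle G_1,(u,v)\rangle$ and $\langle G_2,(u,v)\rangle$ lie on opposite sides of $S_{\mathcal{Q}_\text{BDS}}$. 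This forces $\Pi(G_1)\neq\Pi(G_2)$, since otherwise $\langle\Pi(G_1),(u,v)\rangle$ would have to be both in and not in $S'$.

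Next I would exhibit a family of graphs on $n$ vertices realizing superpolylogarithmically many distinct BDS orderings. The cleanest gadget is the star: fix a center $c$ and $n-1$ leaves, assign $c$ the minimum number so that the search starts at $c$, and let the numbering of the leaves range over all $(n-1)!$ assignments. Since the search visits all children of $c$ in the order induced by the vertex numbering and each leaf is then a dead end, the visiting order of the leaves is exactly the ascending order of their assigned numbers; hence distinct numberings of the leaves yield $(n-1)!$ pairwise-distinct BDS orderings of the vertex identities. Each such graph has size $|G|\in\Theta(n\log n)=\text{poly}(n)$, so the polylog bound gives $|\Pi(G)|\in O(\text{polylog}(n))$, and over the finite alphabet $\Sigma$ the number of possible outputs $\Pi(G)$ is at most $|\Sigma|^{O(\text{polylog}(n))}=2^{O(\text{polylog}(n))}$.

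Then the pigeonhole principle finishes the contradiction: because $\log((n-1)!)=\Theta(n\log n)$ dominates every $\text{polylog}(n)$, for all sufficiently large $n$ we have $(n-1)!>2^{O(\text{polylog}(n))}$, so two of the star graphs, say $G_1\neq G_2$ with distinct BDS orderings, must satisfy $\Pi(G_1)=\Pi(G_2)$, which the first step rules out. This establishes $\mathcal{Q}_\text{BDS}\notin\sqcap'T_Q^0$. For the two consequences, I would note that $\mathcal{Q}_\text{BDS}$ is nevertheless $\sqcap$-tractable: an \emph{unrestricted} preprocessing function may compute and output the full visiting order as a rank array (in PTIME, since BDS is in P), after which deciding whether $u$ precedes $v$ is trivially in NC, so $\mathcal{Q}_\text{BDS}\in\sqcap{}T_Q^0$. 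Combined with the earlier remark that $\sqcap'T_Q^0\subseteq\sqcap{}T_Q^0$, this yields the proper containment $\sqcap'T_Q^0\subsetneq\sqcap{}T_Q^0$. Finally, since $S_{\mathcal{Q}_\text{BDS}}$ is decidable in PTIME, $\mathcal{Q}_\text{BDS}$ is a PTIME query class; as it lies outside $\sqcap'T_Q^0$ while $\sqcap'T_Q^0\subseteq\sqcap{}T_Q^0\subseteq\text{P}$, we obtain $\sqcap'T_Q^0\subsetneq\text{P}$, and, pleasingly, this separation is unconditional rather than contingent on $\text{P}\neq\text{NC}$.

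The hard part will be the second step: certifying that the chosen gadget genuinely realizes factorially many \emph{distinct} BDS orderings under the precise breadth-depth semantics and numbering-based tie-breaking, and that two different numberings are never collapsed to the same visiting sequence. The star is attractive precisely because its search dynamics are transparent, but the verification must be pinned down against the exact rule ``visit all children, pushing them onto a stack in the reverse order induced by the vertex numbering.'' Once a family with $2^{\omega(\text{polylog}(n))}$ distinct orderings is secured, the size bound on outputs and the pigeonhole step are entirely routine; all the delicacy lives in analysing the search on the chosen family.
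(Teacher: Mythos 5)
Your proof is correct and takes essentially the same route as the paper's: a counting/pigeonhole argument showing that a polylog-size output $\Pi(G)$ admits only $2^{O(\text{polylog}(n))}$ distinct values, far fewer than the factorially many distinct BDS visiting orders, each of which must be fully recoverable from $\Pi(G)$ via $S'$; the consequences then follow from $\mathcal{Q}_\text{BDS}\in\sqcap{}T_Q^0$ and $\sqcap'T_Q^0\subseteq\sqcap{}T_Q^0\subseteq\text{P}$ exactly as in the paper. The one substantive difference is in your favor: your explicit star-graph family realizing $(n-1)!$ pairwise-distinct orderings makes precise a step the paper leaves implicit (it simply asserts there are $n!$ realizable visiting-order functions), so your write-up is, if anything, the more rigorous of the two.
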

\begin{proof}
  By contradiction, suppose that $\mathcal{Q}_\text{BDS}\in\sqcap'T_Q^0$. Then there exist a preprocessing function $\Pi$ and a language $S'$
  of pairs which is in NC, such that,
  \begin{enumerate}
    \item $\langle{}G,(u,v)\rangle\in{}S_{\mathcal{Q}_\text{BDS}}$ iff $\langle\Pi(G),(u,v)\rangle\in{}S'$,
    \item $\Pi$ is a PTIME function and for each $x$, $|\Pi(x)|\in{}O(\text{polylog}(|x|))$.
  \end{enumerate}
  Supposing that $G$ has $n$ vertexes, we see that $|G|\in{}O(n^2)$. According to the polylog-size output property of $\Pi$, $|\Pi(G)|\in{}O(\text{polylog}(n^2))=O(\text{polylog}(n))$.
  Because $O(\text{polylog}(n))$ is a sub-linear size, we can get that $|\Pi(G)|<n$ when $n$ increases.

  On the other hand, given $S'$, we can treat $\Pi(G)$ as a function $f_{\Pi(G)}$ that takes a pair $(u,v)$ of nodes as input and outputs 1 if $\langle\Pi(G),(u,v)\rangle\in{}S'$ or 0, otherwise.
  That is, $f_{\Pi(G)}(u,v)=1$ iff $\langle\Pi(G),(u,v)\rangle\in{}S'$, equivalent to that $\langle{}G,(u,v)\rangle\in{}S_{\mathcal{Q}_\text{BDS}}$, which means $u$ appears before $v$ in the Breadth-Depth Search of $G$.
  Counting such functions, there are $n!$ functions in total, each for a permutation of all $n$ nodes of $G$, representing the list of nodes in the same order as they are visited during the search.

  Focusing on $|\Pi(G)|$, it is proven that $|\Pi(G)|<n$ when $n$ is large enough, so the count of $\Pi(G)$ is at most $2^n$ (the base is a constant number determined by the encoding method and we choose 2 here).
  Since $2^n<n!$ when $n$ is large enough, there must exist some function $f_{\Pi(G)}$ that cannot be derived from any $\Pi(G)$, which is a contradiction.
  Whence, $\mathcal{Q}_\text{BDS}$ is not $\sqcap'$-tractable.

  Note that in \cite{fan2013making}, it is proven that $\mathcal{Q}_\text{BDS}\in\sqcap{}T_Q^0$.
  This, together with the previous argument of $\mathcal{Q}_\text{BDS}\notin\sqcap'T_Q^0$, yields that $\sqcap'T_Q^0$ is properly contained in $\sqcap{}T_Q^0$.
  In addition, $\sqcap{}T_Q^0$ is contained in P \cite{fan2013making}, so our complexity class $\sqcap'T_Q^0$ is properly contained in P regardless of whether $\text{P}=\text{NC}$.
  This completes the proof of the theorem.
\end{proof}

\section{Conclusion}
\label{section_conclusion}
In the paper, we have pursued Fan et al.'s methodology \cite{fan2013making} of providing a formal foundation, in terms of computational complexity, for studying the tractability of query classes in the context of big data.
More specifically, motivated by $\sqcap$-tractability, we have introduced a notion of $\sqcap'$-tractable queries, which is more feasible on big data.
Such queries can be processed in parallel poly-logarithm time (NC) after a one-time PTIME data preprocessing, while the preprocessing function guarantees a polylog-size output.
We have addressed the decision problems and classes of Boolean queries that can be made $\sqcap'$-tractable,
based on our factorizations with constant redundancy. Moreover, we have discussed an NC reduction $\leq_{fcr}^{NC}$ and showed that it is transitive and compatible in the complexity class $\sqcap{}'T_P$ of decision problems and in the complexity class $\sqcap{}'T_Q$ of classes of Boolean queries that can be made $\sqcap'$-tractable.
In addition, we have shown that under the reduction $\leq_{fcr}^{NC}$, \mbox{BDS} and $\mathcal{Q}_\text{BDS}$ are $\sqcap{}'T_P$-complete and $\sqcap{}'T_Q$-complete, respectively.
For the complexity class $\sqcap'T_Q^0$, the set of all $\sqcap'$-tractable queries, we have discussed another existing NC reduction $\leq^{NC}_{F}$,
not allowing the re-factorization of data and query parts.
We have shown that under $\leq^{NC}_{F}$, the existence of a complete query for $\sqcap'T_Q^0$ is closely related to whether $\text{P}=\text{NC}$.
Moreover, we have proven that compared to $\sqcap{}T_Q^0$, our $\sqcap'T_Q^0$ is strictly smaller and thus $\sqcap'T_Q^0$ is properly contained in P.

\begin{figure}
\centerline{\includegraphics[width=7cm, trim=8cm 4.4cm 13cm 0.6cm, clip=true]{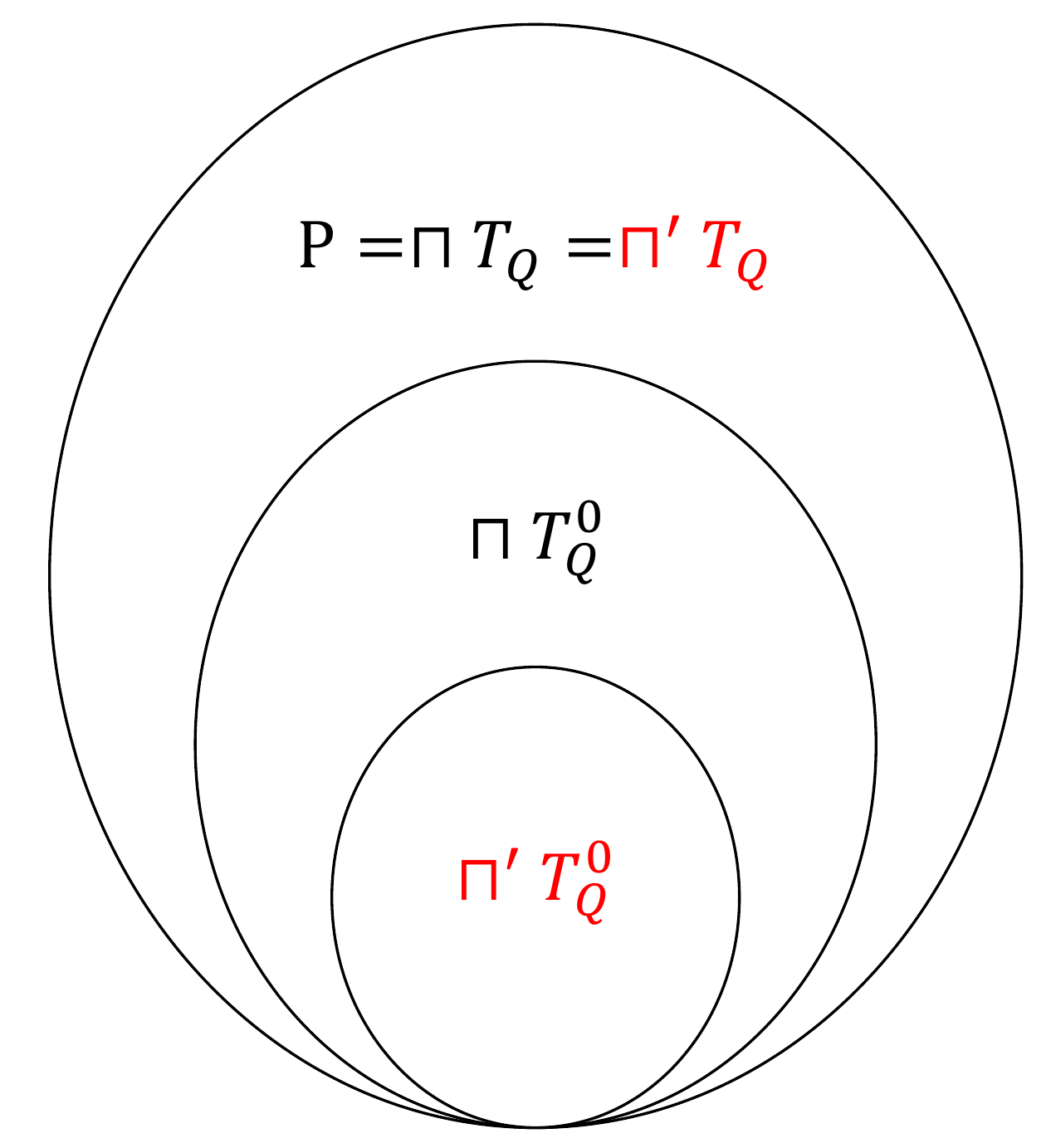}}
\caption{Venn diagram of related complexity classes, where $\sqcap'T_Q^0\subsetneq\sqcap{}T_Q^0$, and $\sqcap{}T_Q^0\subsetneq\text{P}$ unless $\text{P}=\text{NC}$.}
\label{figure_2}
\end{figure}

We sum up these main results as follows and display them in the Venn diagram of related complexity classes in Figure~\ref{figure_2}:
\begin{enumerate}
\item $\sqcap'T_Q^0\subsetneq\sqcap{}T_Q^0$, and thus $\sqcap'T_Q^0\subsetneq\text{P}$, regardless of whether $\text{P}=\text{NC}$.
\item All classes of Boolean queries in P can be made $\sqcap'$-tractable.
\item For decision problems, $\sqcap'T_P=\text{P}$.
\end{enumerate}

There are two problems which are worth further studying.
First, in the present work we have focused on the preprocessing functions with short-output restriction.
In fact, there are many other characteristics of the preprocessing functions utilized to deal with big data, such as dividing a big graph into clusters for the subsequent preprocessing \cite{Yang20141563,kang2011gbase}.
It is interesting to develop the corresponding tractability of query classes according to different characteristics.
Second, note that NC, which is a measure on time, has been considered as a feasible solution after preprocessing.
Therefore, it would be interesting to take into account the measures on the other aspects, such as space, of feasible solutions.

\section*{Acknowledgements}
The authors would like to thank Professor Wenfei Fan for his invaluable suggestions.
This work was supported by the National Natural Science Foundation of China (Grants No. 61170299 and 61370053).


\end{document}